\pgfplotsset{width=7cm,compat=1.9}
\newtheorem{theorem}{Theorem}
\newtheorem{lemma}{Lemma}
\newtheorem{corollary}{Corollary}
\newtheorem{prop}{Proposition}
\newtheorem{definition}{Definition}
\newcommand{\mc}{\mathcal}
\newcommand{\mb}{\mathbf}
\newcommand{\mbb}{\mathbb}
\newcommand{\id}{\mathbb{I}}
\newcommand{\sw}{\mathbb{S}}
\newcommand{\sym}{\mathrm{sym}}
\newcommand{\comments}[1]{}
\begin{document}
\title{Performance analysis of multi-shot shadow estimation}

\author{You Zhou~\orcidlink{0000-0003-0886-077X}}
\email{you\_zhou@fudan.edu.cn}
\affiliation{Key Laboratory for Information Science of Electromagnetic Waves (Ministry of Education), Fudan University, Shanghai 200433, China}

\author{Qing Liu~\orcidlink{0000-0002-1576-1975}}
\email{liuqingppk@gmail.com}
\affiliation{Key Laboratory for Information Science of Electromagnetic Waves (Ministry of Education), Fudan University, Shanghai 200433, China}
\maketitle

\begin{abstract}
 Shadow estimation is an efficient method for predicting many observables of a quantum state with a statistical guarantee. In the multi-shot scenario, one performs projective measurement on the sequentially prepared state for $K$ times under the same unitary evolution, and repeats this procedure for $M$ rounds of random sampled unitary, which results in $MK$ times measurements in total. Here we analyze the performance of shadow estimation in this multi-shot scenario, which is characterized by the variance of estimating the expectation value of some observable $O$. We find that in addition to the shadow-norm $\|O \|_{\mathrm{shadow}}$ introduced in [Huang et.al.~Nat.~Phys.~2020\cite{huang2020predicting}], the variance is also related to another norm, and we denote it as the cross-shadow-norm $\|O \|_{\mathrm{Xshadow}}$. For both random Pauli and Clifford measurements, we analyze and show the upper bounds of $\|O \|_{\mathrm{Xshadow}}$. In particular, we figure out the exact variance formula of Pauli observables for random Pauli measurements. Our work gives theoretical guidance for the application of multi-shot shadow estimation. 
\end{abstract}

\section{Introduction}
Learning the properties of quantum systems is of fundamental and practical interest, which can uncover quantum physics and enable quantum technologies. Traditional quantum state tomography \cite{haah2017sample,flammia2012quantum,kliesch2021theory} is not efficient with the increasing of qubit number in various quantum simulating and computing platforms \cite{altman2021quantum,alexeev2021quantum}. Recently, randomized measurements \cite{elben2023randomized}, especially the shadow estimation method \cite{aaronson2019shadow,huang2020predicting} are proposed for learning the quantum systems efficiently with a statistical guarantee. 

In shadow estimation, a quantum state $\rho$ is prepared sequentially, evolved by a randomly sampled unitary $\rho\rightarrow U\rho U^{\dag}$, and finally measured in the computational basis. With the measurement results and the information of $U$, one can construct the shadow snapshot $\hat{\rho}$, which is an unbiased estimator of $\rho$. Using a few independent snapshots, many properties described by the observable $O$, for instance, the local Pauli operators and fidelities to some entangled states \cite{huang2020predicting}, can be predicted efficiently. Shadow estimation has been applied to many quantum information tasks, from quantum  correlation detection  \cite{elben2020mixedstate,rath2021Fisher,liu2022detecting}, quantum chaos diagnosis \cite{garcia2021quantum,mcginley2022quantifying}, to quantum error mitigation \cite{seif2023shadow,Hu2022Logical}, quantum machine learning \cite{huang2022quantum,huang2022provably}, and near-term quantum algorithms \cite{Stefan2022BP}.
There are also a few works aiming to enhance the performance of shadow estimation from various perspectives, for instance, derandomization \cite{huang2021efficient} and locally biased shadow for measuring Pauli observables \cite{hadfield2022measurements,wu2023overlapped}, hybrid shadow for polynomial functions \cite{zhou2022hybrid}, single-setting shadow \cite{Roman2022Single}, generalized-measurement shadow \cite{Chau2022Generalized}, shadow estimation with a shallow circuit \cite{hu2023classical,bertoni2022shallow,arienzo2022closed} or restricted unitary evolution \cite{Hu2022Hamiltonian,tran2023measuring,mcginley2022shadow,van2022hardware}, and practical issues considering measurement noises \cite{Chen2021Robust,koh2022classical}.

In the current framework, shadow estimation in general needs to sample say $M$ times random unitary to generate $M$ shadow snapshots. However, the execution of different unitary is still resource-consuming in real experiments, which generally means changing control pulses or even physical settings. One direct idea is to add more measurement shots under the same unitary evolution \cite{elben2023randomized,seif2023shadow}, say $K$-shot per unitary, to compensate for the realization of random unitaries. This strategy, denoted as multi-shot shadow estimation, however, still lacks a rigorous performance analysis compared to the original one. 

\begin{figure}[h]
\centering
\includegraphics[width=0.6\textwidth]{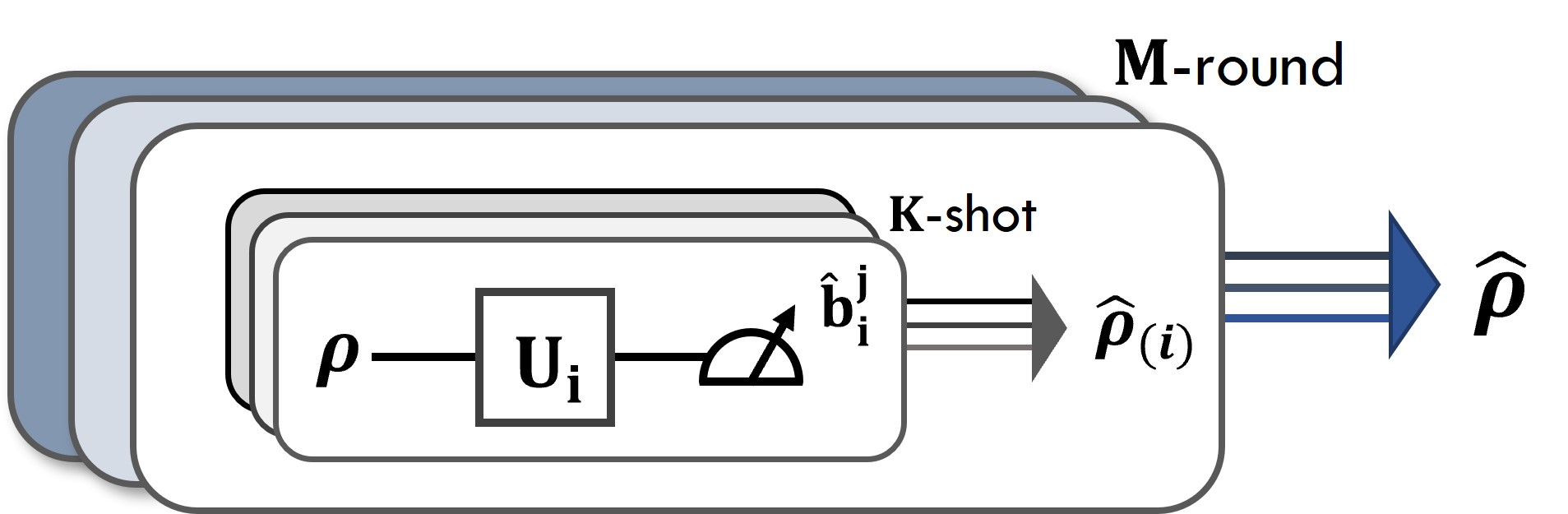}
\caption{Multi-shot shadow estimation. }
\label{fig:multi-shot}
\end{figure}

In this work, we fill this gap by giving the statistical guarantee for multi-shot shadow estimation. We estimate the variance of measuring some observable $O$ and relate it to the introduced cross-shadow-norm $\|O \|_{\mathrm{Xshadow}}$, which supplements the original shadow-norm $\|O \|_{\mathrm{shadow}}$ \cite{huang2020predicting}. The final variance is practically bounded by a interpolation of $\|O \|_{\mathrm{Xshadow}}$ and $\|O \|_{\mathrm{shadow}}$ controlled by the shot-number $K$. For both Pauli measurements and Clifford measurements, where the random unitary is sampled from single-qubit and global Clifford circuits respectively, we analyze $\|O \|_{\mathrm{Xshadow}}$ and the variance. In the Pauli measurement, we find that multi-shot shadow estimation shows an advantage when estimating Pauli observable, conditioning on the expectation value of that observable. However, in the Clifford measurement, we find no clear benefit of increasing the shot-number $K$ for a given number of random evolution $M$, especially for estimating the fidelity for many-qubit states. Our work could advance further applications of multi-shot shadow estimation.

\section{Multi-shot shadow estimation}

In this section, we introduce the background of shadow estimation and its multi-shot extension. Here, we focus on the $n$-qubit quantum system, that is, the state $\rho\in \mc{H}_D$ with $D=2^n$, and denote the computational basis of $\mc{H}_D$ as $\{\ket{\mb{b}}\}=\{\ket{b_1b_2\cdots b_n}\}$ with $b_i=0/1$. 
In shadow estimation \cite{huang2020predicting}, one prepares an unknown quantum state $\rho$ in the experiment sequentially for $M$ rounds. In the $i$-th round, one evolves the quantum system with a random unitary $U$ sampled from some ensemble $\mc{E}$ to get $U\rho U^{\dag}$, and measures it in the computational basis to get the result $\ket{\mb{b}^{(i)}}$.  We call $M$ the setting-number hereafter, since the sampled unitary $U$ in each round determines the measurement setting. The shadow snapshot can be constructed as follows.
\begin{equation}\label{eq:shadowPost}
    \begin{aligned}
       \hat{\rho}_{(i)}:= \mc{M}^{-1}\left(U^{\dag}\ket{\hat{\mb{b}}^{(i)}}\bra{\hat{\mb{b}}^{(i)}}U\right),
    \end{aligned}
\end{equation}
with $\hat{\mb{b}}^{(i)}$ a random variable. $\hat{\rho}_{(i)}$ is an unbiased estimator of $\rho$, such that $\mathbb{E}_{\{U,\mb{b}^{(i)}\}}\left(\hat{\rho}_{(i)}\right)=\rho$.  And the inverse (classical) post-processing $\mc{M}^{-1}$ is determined by the chosen random unitary ensemble \cite{huang2020predicting,hu2023classical,ohliger2013efficient,Hu2022Tensor}. For the $n$-qubit random Clifford circuit ensemble $\mc{E}_{\mathrm{Cl}}$ and the tensor-product of random single-qubit Clifford gate ensemble $\mc{E}_{\mathrm{Pauli}}$, one has $\mc{M}_C^{-1}=\mc{M}_n^{-1}$ and $\mc{M}_P^{-1}=\otimes_{i=1}^n\mc{M}_1^{-1}$ respectively, with $\mc{M}_n^{-1}(A)=(2^n+1)A-\id_{2^n}\tr(A)$ \cite{huang2020predicting}. The random measurements from $\mc{E}_{\mathrm{Cl}}$ and $\mc{E}_{\mathrm{Pauli}}$ are denoted as Clifford and Pauli measurement primitives respectively. 

\begin{algorithm}[H]
\caption{Multi-shot shadow estimation}\label{algo:tShadow}
\begin{algorithmic}[1]
\Require
$M\times K$ sequentially prepared $\rho$
\Ensure
The shadow set $\{\hat{\rho}_{(i)}\}_{i=1}^M$. 
\For{$i= 1~\text{\textbf{to}}~M$} 
 \State Randomly choose $U\in\mc{E}$ and record it. 
 \For{$j= 1~\text{\textbf{to}}~K$} 
  \State Evolve the state $\rho$ using $U$ to get $U\rho U^{\dag}$.
  \State Measure the state in the computational basis $\{\ket{\mb{b}}\}$.
  \State Construct the unbiased estimator $\hat{\rho}_{(i)}^{(j)}$ with the result $\mb{b}^{(i,j)}$ by Eq.~\eqref{eq:shadowPost}, where $i$ and $j$ denoting the $j$-th shot under the $i$-th unitary.
  \EndFor
 \State Average $K$ results under the same unitary to get $\hat{\rho}_{(i)}=\frac1{K}\sum_j\hat{\rho}^{(j)}_{(i)}$.
\EndFor
\State Get the shadow set $\left\{\hat{\rho}_{(1)}, \widehat{\rho}_{(2)}, \cdots \hat{\rho}_{(M)}\right\}$, which contains $M$ independent estimators of $\rho$.
\end{algorithmic}
\end{algorithm}

In the multi-shot shadow estimation illustrated in Fig. \ref{fig:multi-shot}, one conducts $K$ shots by applying the same unitary $U$ sampled in each round, that is, the measurement settings of these $K$ shots are the same. So the total preparation-and-measurement number is $MK$. Denote the measurement result in $i$-th round and $j$-th shot as $\mb{b}^{(i,j)}$. One can construct the estimator of $\rho$ in this shot as $\hat{\rho}_{(i)}^{(j)}$ following Eq.~\eqref{eq:shadowPost}, and then average on $K$ shots and $M$ rounds to get
\begin{equation}\label{rhoij}
    \begin{aligned}
\hat{\rho}_{(i)}=K^{-1}\sum_{j\in[K]}\hat{\rho}^{(j)}_{(i)},\ \ \ \
\hat{\rho}=M^{-1}\sum_{i\in[M]} \hat{\rho}_{(i)}.
\end{aligned}
\end{equation}
This procedure is listed in Algorithm \ref{algo:tShadow}, and we remark that as $K=1$, it reduces to the original shadow estimation.

To estimate the expectation value of an observable $O$, one can construct the unbiased estimator as $\hat{O}=\tr(O\hat{\rho})$. The performance of shadow estimation is mainly characterized by the variance of $\hat{o}$, which determines the experiment time to control the estimation error,
\begin{equation}
    \begin{aligned}
     \mathrm{Var}\left(\hat{O}\right)&=\mathbb{E}\ \tr(O\widehat{\rho})^2- \tr(O\rho)^2.
    \end{aligned}
\end{equation}
Here, the expectation value is taken on the random unitary $U$ in $M$ rounds and the measurement result $\mb{b}^{(i,j)}$ in $MK$ shots. In the following sections, we give a general expression for this key quantity, and then analyze the variance for Pauli and Clifford measurements respectively.  We remark that even though we focus on estimating one observable $O$ here, the variance result can be applied to simultaneously estimating a set of observables $\{O^i\}$ by using the median-of-mean technique \cite{huang2020predicting}, which is discussed in Appendix \ref{App:GeneralO}.
\section{The general variance expression and cross-shadow-norm}
In this section, we give the general variance expression for the multi-shot shadow estimation and relate the variance to the cross-shadow-norm. First, we define two functions of a quantum state $\sigma$ and an observable $O$ as follows,
\begin{equation}\label{Gamma12}
    \begin{aligned}
&\Gamma_1(\sigma,O):=\mathbb{E}_U \sum_{\mb{b}} \bra{\mb{b}}U\sigma U^{\dag}\ket{\mb{b}} \bra{\mb{b}}U\mc{M}^{-1}(O)U^{\dag}\ket{\mb{b}}^2,\\
&\Gamma_2(\sigma,O):=\mathbb{E}_U  \sum_{\mb{b},\mb{b}'}\bra{\mb{b}}U\sigma U^{\dag}\ket{\mb{b}}\bra{\mb{b}}U\mc{M}^{-1}(O)U^{\dag}\ket{\mb{b}}  \bra{\mb{b}'}U\sigma U^{\dag}\ket{\mb{b}'}
       \bra{\mb{b}'}U\mc{M}^{-1}(O)U^{\dag}\ket{\mb{b}'},
\end{aligned}
\end{equation}
where $\mb{b}, \mb{b}'$ are computational basis for $n$-bit. The cross-shadow-norm is defined as follows. 
\begin{definition}
The cross-shadow-norm (short as XSnorm) of some observable $O$ is defined as 
\begin{equation}\label{eq:Xshadow}
\begin{aligned}
\|O \|_{\mathrm{Xshadow}}=\max_{\sigma\ \mathrm{state}}\  \Gamma_2(\sigma,O)^{\frac1{2}}.
    \end{aligned}
\end{equation}
\end{definition}
The proof of it being a norm is in Appendix  \ref{app:CSN}. Note that the original shadow norm (short as Snorm hereafter) is defined as $\|O\|_{\mathrm{shadow}}=\max_{\sigma} \Gamma_1(\sigma,O)^{\frac1{2}}$ \cite{huang2020predicting}. By definition, one has the upper bounds $\Gamma_1(\sigma,O)\leq \|O\|^2_{\mathrm{shadow}}$, and $\Gamma_2(\sigma,O)\leq \|O\|^2_{\mathrm{Xshadow}}$. Then we show the first result of this work considering the statistical variance for measuring some observable using multi-shot shadow estimation.
\begin{theorem}\label{Th:XSN}
    The statistical variance of $\tr(O \hat{\rho})$  shows
\begin{equation}\label{XSN1}
    \begin{aligned}
\mathrm{Var}\left[\tr(O  \hat{\rho})\right]&=\frac1{M}\left[\frac1{K}\Gamma_1(\rho,O_0)+(1-\frac1{K})\Gamma_2(\rho,O_0)-\tr(O_0\rho)^2\right],
\end{aligned}
\end{equation}
where $O_0=O-\tr(O)\id_D/D$ is the traceless part of $O$, and the functions $\Gamma_1,\Gamma_2$ are defined in Eq.~\eqref{Gamma12}, which can be bounded by the square of the Snorm and XSnorm respectively. As a result, one has the upper bound of the variance as
\begin{equation}\label{XSN2}
    \begin{aligned}
\mathrm{Var}\left[\tr(O  \hat{\rho})\right]\leq \frac1{M}\left[\frac1{K}\|O_0 \|^2_{\mathrm{shadow}}+(1-\frac1{K}) \|O_0 \|^2_{\mathrm{Xshadow}}\right].
\end{aligned}
\end{equation}
\end{theorem}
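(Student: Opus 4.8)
The plan is to exploit the two-level independence structure of $\hat\rho$. First I would reduce to the traceless part $O_0$: since $\mc{M}^{-1}$ is trace-preserving for both the Clifford channel $\mc{M}_n^{-1}$ and the Pauli channel $\otimes_i\mc{M}_1^{-1}$, every single-shot snapshot obeys $\tr\big(\hat\rho^{(j)}_{(i)}\big)=1$, hence $\tr(\hat\rho)=1$ deterministically and $\tr(O\hat\rho)=\tr(O_0\hat\rho)+\tr(O)/D$ differs from $\tr(O_0\hat\rho)$ by a constant, so $\mathrm{Var}[\tr(O\hat\rho)]=\mathrm{Var}[\tr(O_0\hat\rho)]$. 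Next, because the $M$ rounds employ independently sampled unitaries and independent measurement outcomes, the round estimators $\{\tr(O_0\hat\rho_{(i)})\}_{i=1}^M$ are i.i.d., each with mean $\tr(O_0\rho)$ by the stated unbiasedness of $\hat\rho_{(i)}$; this immediately gives $\mathrm{Var}[\tr(O_0\hat\rho)]=\tfrac1M\,\mathrm{Var}[\tr(O_0\hat\rho_{(1)})]$, so it suffices to compute the single-round variance.

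For a single round, set $X_j:=\tr\big(O_0\hat\rho^{(j)}_{(1)}\big)$. Using that $\mc{M}^{-1}$ is self-adjoint with respect to the trace inner product, $X_j=\langle\hat{\mb b}^{(j)}|U\,\mc{M}^{-1}(O_0)\,U^\dagger|\hat{\mb b}^{(j)}\rangle$, where the outcome $\hat{\mb b}^{(j)}$ is drawn with probability $\langle\mb b|U\rho U^\dagger|\mb b\rangle$. The key structural observation is that, conditioned on the round's unitary $U$, the $K$ variables $X_1,\dots,X_K$ are i.i.d. Expanding the squared empirical mean,
\begin{equation}
\mathbb{E}\!\left[\Big(\tfrac1K\sum_{j\in[K]}X_j\Big)^{\!2}\right]=\frac1K\,\mathbb{E}[X_1^2]+\Big(1-\frac1K\Big)\,\mathbb{E}[X_1X_2],
\end{equation}
the $K$ diagonal terms contribute $\mathbb{E}[X_1^2]=\mathbb{E}_U\sum_{\mb b}\langle\mb b|U\rho U^\dagger|\mb b\rangle\langle\mb b|U\mc{M}^{-1}(O_0)U^\dagger|\mb b\rangle^2=\Gamma_1(\rho,O_0)$, while the $K(K-1)$ off-diagonal terms, by conditional independence, contribute $\mathbb{E}[X_1X_2]=\mathbb{E}_U\big(\mathbb{E}[X_1\mid U]\big)^2=\Gamma_2(\rho,O_0)$. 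Subtracting $(\mathbb{E}[X_1])^2=\tr(O_0\rho)^2$ and multiplying by $\tfrac1M$ yields Eq.~\eqref{XSN1}.

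Equation~\eqref{XSN2} is then immediate: by definition of the maxima over states one has $\Gamma_1(\rho,O_0)\le\|O_0\|^2_{\mathrm{shadow}}$ and $\Gamma_2(\rho,O_0)\le\|O_0\|^2_{\mathrm{Xshadow}}$, and dropping the nonnegative term $\tr(O_0\rho)^2$ gives the claimed interpolation bound.

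The routine content is the two moment expansions; the step requiring genuine care is the bookkeeping of the nested randomness — namely that the $M$ rounds are fully independent whereas the $K$ shots within a round are independent only after conditioning on $U$ — together with verifying the trace-preserving and self-adjointness properties of $\mc{M}^{-1}$ that make the reduction to $O_0$ and the rewriting of $X_j$ legitimate. I would also remark that the single-round step is exactly the law of total variance applied across the $U$-average, which is arguably the cleanest way to present it.
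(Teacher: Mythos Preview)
Your proof is correct and follows essentially the same route as the paper's Appendix~\ref{app:ThXSN}: reduce to the traceless part $O_0$, use the i.i.d.\ structure across the $M$ rounds to factor out $1/M$, then split the single-round second moment into the $K$ diagonal terms giving $\Gamma_1(\rho,O_0)$ and the $K(K-1)$ off-diagonal terms giving $\Gamma_2(\rho,O_0)$ via conditional independence of shots given $U$. Your presentation is in fact slightly more explicit than the paper's about why the reduction to $O_0$ is legitimate (trace preservation of $\mc{M}^{-1}$) and about the self-adjointness step, and your remark that the single-round computation is an instance of the law of total variance is a nice conceptual framing the paper does not mention.
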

The proof of Theorem \ref{Th:XSN} is left in Appendix \ref{app:ThXSN}.
We remark that as $K=1$, the variance in Eq.~\eqref{XSN1} and the upper bound in Eq.~\eqref{XSN2} reduce to the result of the original shadow estimation \cite{huang2020predicting}. Note that the variance is an interpolation of the two functions $\Gamma_1(\rho,O_0)$ and $\Gamma_2(\rho,O_0)$ with the shot-number $K$. And the advantage of introducing multi-shot should come from the condition when $\Gamma_2(\rho,O_0)\ll\Gamma_1(\rho,O_0)$. We remark that Ref.~\cite{seif2023shadow} also investigated on this kind of multi-shot variance with a focus on quantum error mitigation.

The function $\Gamma_1(\sigma,O)$ and the Snorm $\|O_0 \|_{\mathrm{shadow}}$ have been extensively studied in Ref.~\cite{huang2020predicting}. Thus, the remaining content of this work is to analyze $\Gamma_2(\sigma,O)$ and the XSnorm $\|O_0 \|_{\mathrm{Xshadow}}$. To proceed, one can write $\Gamma_2(\sigma,O)$ in Eq.~\eqref{Gamma12} in the following twirling channel form.

\begin{equation}\label{eq:Gamma2new}
\begin{aligned}
\Gamma_2(\sigma,O)
=&\tr[\sigma\otimes \mc{M}^{-1}(O) \otimes \sigma\otimes \mc{M}^{-1}(O)\  \Phi^{(4,\mc{E})}\left( \Lambda_n \right)].
    \end{aligned}
\end{equation}
Here, we denote the 4-fold twirling channel as $\Phi^{(4,\mc{E})}(\cdot)=\mathbb{E}_{\{U\in \mc{E}\}} U^{\dag \otimes 4}(\cdot)U^{ \otimes 4}$, which is dependent on the random unitary ensemble $\mc{E}$, and the 4-copy $n$-qubit diagonal operator is defined as
\begin{equation}\label{eq:Lambdan}
\begin{aligned}
\Lambda_n:=\sum_{\mb{b},\mb{b}'} (\ket{\mb{b}}\bra{\mb{b}})^{\otimes 2}\otimes (\ket{\mb{b}'}\bra{\mb{b}'})^{\otimes 2}. 
    \end{aligned}
\end{equation}
For the global $n$-qubit Clifford ensemble, we denote the twirling channel as $\Phi^{(4,\mathrm{Cl})}_n$, and the one of local Clifford is just the tensor product $\bigotimes_{i=1}^n \Phi^{(4,\mathrm{Cl})}_{1,{(i)}}$.

It is known that Clifford ensemble is not a 4-design \cite{zhu2016clifford}. Hence, in order to calculate $\Gamma_2(\sigma,O)$ and the XSnorm, one should apply the representation theory (rep-th) of Clifford group in the 4-copy Hilbert space \cite{zhu2016clifford,gross2021schur}.
In the following sections, we first figure out the exact result of random Pauli measurements, say single-qubit Clifford twirling based on direct calculation, \emph{without} using the rep-th; and then give the bound of $\Gamma_2(\sigma,O)$ for random Clifford measurements, say $n$-qubit Clifford twirling with results from rep-th.

\section{Variance analysis for Pauli measurements}
In this section, we focus on the observable $O$ being an $n$-qubit Pauli operator $P$, that is $P=\otimes_{i=1}^nP_i$, with $P_i\in\{\id_2,X,Y,Z\}$. This choice is of practical interest, for example, in the measurement problem in quantum chemistry simulation \cite{Sam2020RMP,hadfield2022measurements,huang2021efficient}. We say a Pauli operator $P$ is $w$-weight, if there are $w$ qubits with $P_i\neq \id_2$, i.e., there are single-qubit Pauli operators on $w$ qubits and identity operators on the other $n-w$ qubits. We have the following result for the Pauli observable.
\begin{prop}\label{prop:pauliG2P}
    Suppose $O=P$ is a Pauli observable with weight $w$,  the function $\Gamma_2$ defined in Eq.~\eqref{Gamma12} shows
\begin{equation}\label{}
    \begin{aligned}
\Gamma_2^{\mathrm{Pauli}}(\sigma,P)=3^w \tr(P\sigma )^2
\end{aligned}
\end{equation}
for the random Pauli measurements. In this case, the XSnorm is $\|P\|^{\mathrm{Pauli}}_{\mathrm{Xshadow}}=3^{w/2}$ by Eq.~\eqref{eq:Xshadow}.
\end{prop}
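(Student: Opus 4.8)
The plan is to exploit the complete tensor-product structure of the Pauli primitive, so that the whole computation factorizes qubit by qubit. First I would simplify the inverse map: since $\mc{M}_P^{-1}=\bigotimes_{i=1}^n\mc{M}_1^{-1}$ with $\mc{M}_1^{-1}(\id_2)=\id_2$ and $\mc{M}_1^{-1}(\sigma)=3\sigma$ for $\sigma\in\{X,Y,Z\}$, a weight-$w$ Pauli obeys $\mc{M}_P^{-1}(P)=3^w P$. Substituting this into Eq.~\eqref{Gamma12} and using that the $\mb{b}$- and $\mb{b}'$-sums there are identical, one gets $\Gamma_2^{\mathrm{Pauli}}(\sigma,P)=3^{2w}\,\mathbb{E}_U\big[\sum_{\mb{b}}\bra{\mb{b}}U\sigma U^{\dag}\ket{\mb{b}}\bra{\mb{b}}UPU^{\dag}\ket{\mb{b}}\big]^2$, so everything reduces to evaluating the inner sum for a fixed product unitary $U=\bigotimes_i U_i$ sampled from $\mc{E}_{\mathrm{Pauli}}$.

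Next I would analyze that inner sum. Since $\bra{\mb{b}}UPU^{\dag}\ket{\mb{b}}=\prod_i\bra{b_i}U_iP_iU_i^{\dag}\ket{b_i}$ and the computational-basis diagonal of $\pm X,\pm Y$ vanishes while that of $\pm Z$ does not, this product vanishes unless, for every qubit $i$ with $P_i\neq\id_2$, the single-qubit Clifford $U_i$ conjugates $P_i$ to $\pm Z$; denote this event $V_i$ and write $U_iP_iU_i^{\dag}=\epsilon_i Z$ with $\epsilon_i=\pm1$ on $V_i$. On $\bigcap_{i:P_i\neq\id_2}V_i$ the factor equals $\big(\prod_i\epsilon_i\big)\prod_{i:P_i\neq\id_2}(-1)^{b_i}$, and summing against $\bra{\mb{b}}U\sigma U^{\dag}\ket{\mb{b}}$ one recognizes $\sum_{\mb{b}}\bra{\mb{b}}U\sigma U^{\dag}\ket{\mb{b}}\prod_{P_i\neq\id_2}(-1)^{b_i}=\tr\!\big[\sigma\,U^{\dag}WU\big]$ with $W=\bigotimes_{P_i=\id_2}\id_i\otimes\bigotimes_{P_i\neq\id_2}Z_i$; conjugating $W$ by $U$ gives $U^{\dag}WU=\big(\prod_i\epsilon_i\big)P$ because $U_i\in V_i$ forces $U_i^{\dag}ZU_i=\epsilon_i P_i$. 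Hence the inner sum is exactly $\mathds{1}\!\big[\bigcap_i V_i\big]\,\big(\prod_i\epsilon_i\big)^2\,\tr(\sigma P)=\mathds{1}\!\big[\bigcap_i V_i\big]\,\tr(\sigma P)$: the two sign products cancel.

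Finally I would square and average. Because $\mathds{1}[\cdot]^2=\mathds{1}[\cdot]$ and the Cliffords $U_i$ are i.i.d., $\mathbb{E}_U\,\mathds{1}\!\big[\bigcap_i V_i\big]=\prod_{i:P_i\neq\id_2}\Pr[U_i\in V_i]$, and $\Pr[U_i\in V_i]=\tfrac13$ since a uniformly random single-qubit Clifford maps a fixed nonidentity Pauli to each of the three axes (either sign) with equal probability. Therefore $\Gamma_2^{\mathrm{Pauli}}(\sigma,P)=3^{2w}\cdot 3^{-w}\,\tr(\sigma P)^2=3^{w}\tr(\sigma P)^2$, and maximizing over states, using $|\tr(\sigma P)|\le1$ with equality at a $+1$-eigenstate of $P$, yields $\|P\|_{\mathrm{Xshadow}}^{\mathrm{Pauli}}=3^{w/2}$.

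The step I expect to be the main obstacle is the middle one: the sign bookkeeping together with the fact that the (generally non-product) state $\sigma$ collapses to the single scalar $\tr(\sigma P)$. The subtlety is that the decoupling must come entirely from $\bra{\mb{b}}UPU^{\dag}\ket{\mb{b}}$ --- i.e.\ from the fact that the diagonal-in-$\mb{b}$ part of $UPU^{\dag}$, conjugated back by $U$, is $\pm P$ --- and one has to check that each $\epsilon_i$ appears exactly twice so that it squares away. Equivalently, one can run the argument through the twirl form Eq.~\eqref{eq:Gamma2new}: $\Phi^{(4,\mathrm{Pauli})}(\Lambda_n)=\bigotimes_i\Phi^{(4,\mathrm{Cl})}_{1,(i)}(\Lambda_1)$ with $\Lambda_1=\tfrac14\sum_{A,B\in\{\id_2,Z\}}A^{\otimes2}\otimes B^{\otimes2}$, evaluate the elementary single-qubit twirls $\mathbb{E}_{U_1}(U_1^{\dag}ZU_1)^{\otimes2}=\tfrac13(XX+YY+ZZ)$ and $\mathbb{E}_{U_1}(U_1^{\dag}ZU_1)^{\otimes4}=\tfrac13(X^{\otimes4}+Y^{\otimes4}+Z^{\otimes4})$, and trace against $\sigma$ and $\mc{M}_P^{-1}(P)$ expanded in the Pauli basis; this route makes the qubit-by-qubit decoupling manifest at the cost of somewhat more algebra.
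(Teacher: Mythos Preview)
Your proof is correct and takes a genuinely different, more elementary route than the paper. The paper proceeds via the twirl form in Eq.~\eqref{eq:Gamma2new}: it computes the full 4-fold single-qubit Clifford twirl $\Phi^{(4,\mathrm{Cl})}_1(\Lambda_1)$ explicitly (Lemma~\ref{lemma:SingleLambda}), tensors this over the $w$ active qubits, traces against $(\sigma_{[w]}\otimes P_{[w]})^{\otimes 2}$, and then argues that only the $\mbb{F}^{(4)}$ piece on every site survives because any $\id_2^{\otimes2}$ factor produces a $\tr(P_i)=0$. You instead fix $U$ and evaluate $\sum_{\mb{b}}\bra{\mb{b}}U\sigma U^{\dag}\ket{\mb{b}}\,\bra{\mb{b}}UPU^{\dag}\ket{\mb{b}}$ directly, observing that it equals either $0$ or $\tr(\sigma P)$ according to whether every active $U_i$ sends $P_i$ into $\{\pm Z\}$; after squaring, all that remains is the probability $3^{-w}$ of that event. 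Your argument is shorter and sidesteps the 4-copy algebra entirely for this particular statement; the paper's route has the advantage of producing a reusable structural lemma about $\Phi^{(4,\mathrm{Cl})}_1(\Lambda_1)$. The alternative you sketch at the end---expand $\Lambda_1$ in the Pauli basis and compute $\mathbb{E}_{U_1}(U_1^{\dag}ZU_1)^{\otimes 2}$ and $\mathbb{E}_{U_1}(U_1^{\dag}ZU_1)^{\otimes 4}$---is exactly the content of Lemma~\ref{lemma:SingleLambda} and its proof in Appendix~\ref{app:lemmaSL}, so that second path coincides with the paper's.
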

We remark that compared to $\Gamma_1^{\mathrm{Pauli}}(\sigma,P)=3^w$ in the original shadow \cite{huang2020predicting}, $\Gamma_2^{\mathrm{Pauli}}(\sigma,P)$ has an extra relevant term, that is the square of the expectation value $\tr(P\sigma)^2$. To prove Proposition \ref{prop:pauliG2P}, we mainly apply the following Lemma \ref{lemma:SingleLambda} of the single-qubit Clifford twirling.

\begin{figure}[h]
\centering
\includegraphics[width=0.75\textwidth]{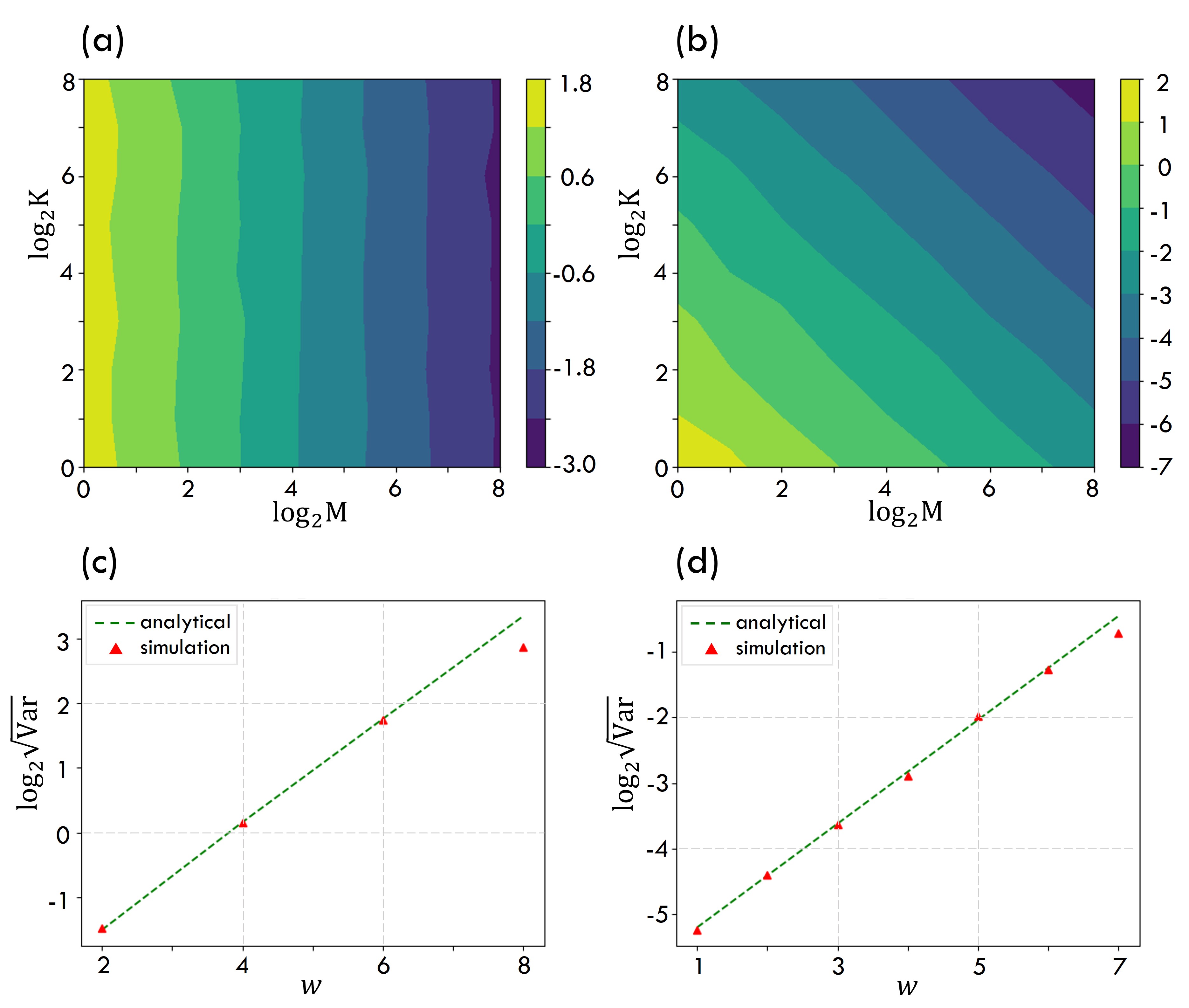}
\caption{Scaling of the statistical variance of estimating $\tr(P \widehat{\rho})$ with shot-number $K$, setting-number $M$ and weight $w$ for different Pauli observables. In (a) and (b), the processed state $\rho$ is a 5-qubit GHZ state. We investigate the dependence of $\log_{2}\sqrt{\mathrm{Var}\left[\tr(P \widehat{\rho})\right]}$ (colored area) on $K$ and $M$ with different Pauli observables $P=\sigma_z^1 \otimes \sigma_z^2 \otimes \id_2^{\otimes 3}$ and $P=\sigma_x^1 \otimes \sigma_x^2 \otimes \id_2^{\otimes 3}$ in (a) and (b) respectively.
The variance in (a) is almost independent with $K$, and for the one in (b) the dependence on $K$ and $M$ is the same. Note that in (a) and (b), $\tr(P\rho)^2=1$ and $0$ respectively, which are consistent with Eq.~\eqref{eq:pauli} and \eqref{eq:pauli-seperate}.
In (c) and (d), we explore the variance dependence on weight $w$. 
Here, the processed state $\rho$ is an 8-qubit GHZ state, and we set $M=K=64$. 
In (c), the Pauli observable is $P = \sigma_z^{\otimes w}\otimes \id_2^{\otimes(8-w)}$ with $w$ being an even number, and in (d), $P = \sigma_x^{\otimes w}\otimes \id_2^{\otimes(8-w)}$, such that $\tr(P\rho)^2=1$ and $0$ respectively. The red triangles represent the simulated variance which shows consistency with the green dotted line, i.e., the analytical variance given by Eq. \eqref{eq:pauli}.}
\label{fig:Pauli-numeric}
\end{figure}

\begin{lemma}\label{lemma:SingleLambda}
The $4$-fold single-qubit Clifford twirling channel maps the operator 
\begin{equation}\label{}
    \begin{aligned}
\Lambda_1:=\sum_{b=0,1}\sum_{b'=0,1}\ket{b}\bra{b}^{\otimes 2}\otimes \ket{b'}\bra{b'}^{\otimes 2}
    \end{aligned}
\end{equation} into
\begin{equation}\label{eq:SingleLambda}
    \begin{aligned}
\Phi^{(4,\mathrm{Cl})}_{1}(\Lambda_1)=\frac1{3}\left(\frac1{2}\id_2^{\otimes 2}\otimes\mbb{F}^{(2)}+\frac1{2}\mbb{F}^{(2)}\otimes \id_2^{\otimes 2}+\mbb{F}^{(4)}\right).
    \end{aligned}
\end{equation}
Here $\mbb{F}^{(t)}:=2^{-t/2}(\id_2^{\otimes t}+X^{\otimes t}+Y^{\otimes t}+Z^{\otimes t})$ on $t$ qubits, and $\mbb{F}^{(2)}$ is the swap operator on two qubits. 
\end{lemma}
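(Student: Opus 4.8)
The plan is to prove Eq.~\eqref{eq:SingleLambda} by a direct term-by-term computation, without invoking the representation theory of the Clifford group. First I would expand $\Lambda_1$ in the Pauli basis. Using $\ket{0}\bra{0}=(\id_2+Z)/2$ and $\ket{1}\bra{1}=(\id_2-Z)/2$, one finds $\sum_{b}\ket{b}\bra{b}^{\otimes 2}=\frac1{2}(\id_2^{\otimes 2}+Z^{\otimes 2})$, hence $\Lambda_1=\frac1{4}\bigl(\id_2^{\otimes 4}+Z^{\otimes 2}\otimes\id_2^{\otimes 2}+\id_2^{\otimes 2}\otimes Z^{\otimes 2}+Z^{\otimes 4}\bigr)$, where the first two tensor slots are the $b$-copies and the last two the $b'$-copies. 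The crucial structural observation is that every term here has the uniform shape $Q^{\otimes k}$ (tensored with identities on the remaining slots) for a single Pauli $Q$, so only low-order moments of a Pauli under single-qubit Clifford conjugation will be needed.

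Next I would twirl each term. Since conjugation by a uniformly random single-qubit Clifford maps $Z$ to a uniformly random signed Pauli in $\{\pm X,\pm Y,\pm Z\}$, one has $\mathbb{E}_U (U^{\dag}ZU)^{\otimes k}=\frac1{6}\sum_{Q}Q^{\otimes k}=\frac1{3}(X^{\otimes k}+Y^{\otimes k}+Z^{\otimes k})$ for the even exponents $k=2,4$ of interest, the signs cancelling. Combining this with the elementary identities $X^{\otimes 2}+Y^{\otimes 2}+Z^{\otimes 2}=2\mbb{F}^{(2)}-\id_2^{\otimes 2}$ and $X^{\otimes 4}+Y^{\otimes 4}+Z^{\otimes 4}=4\mbb{F}^{(4)}-\id_2^{\otimes 4}$, which follow at once from $\mbb{F}^{(t)}=2^{-t/2}\sum_{P\in\{\id_2,X,Y,Z\}}P^{\otimes t}$, the four terms of $\Lambda_1$ twirl respectively to $\id_2^{\otimes 4}$, $\frac1{3}(2\mbb{F}^{(2)}-\id_2^{\otimes 2})\otimes\id_2^{\otimes 2}$, $\frac1{3}\id_2^{\otimes 2}\otimes(2\mbb{F}^{(2)}-\id_2^{\otimes 2})$, and $\frac1{3}(4\mbb{F}^{(4)}-\id_2^{\otimes 4})$. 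Summing with the overall prefactor $\frac1{4}$, the coefficient of $\id_2^{\otimes 4}$ is $1-\frac1{3}-\frac1{3}-\frac1{3}=0$, so these pieces cancel and one is left with exactly $\frac1{6}\mbb{F}^{(2)}\otimes\id_2^{\otimes 2}+\frac1{6}\id_2^{\otimes 2}\otimes\mbb{F}^{(2)}+\frac1{3}\mbb{F}^{(4)}$, which is the right-hand side of Eq.~\eqref{eq:SingleLambda}.

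I do not expect a genuine obstacle here; the computation is short and the care needed is mostly bookkeeping --- tracking which pair of tensor slots each $\mbb{F}^{(2)}$ sits on, and using $\mbb{F}^{(2)}=\frac1{2}(\id_2^{\otimes 2}+X^{\otimes 2}+Y^{\otimes 2}+Z^{\otimes 2})$ (the two-qubit swap) to bring the answer into the stated form. Conceptually, the reason this elementary route suffices, rather than the $4$-copy Clifford rep-th required for the $n$-qubit twirl, is that the orbit of $Z$ under single-qubit Clifford conjugation is merely the six signed Paulis, so the needed second- and fourth-order averages are evaluated by a trivial enumeration. The one place where the group structure is actually used is the claim that this orbit is hit uniformly (each of the six signed Paulis arising from exactly $4$ of the $24$ single-qubit Clifford elements), which I would verify explicitly.
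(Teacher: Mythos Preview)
Your proposal is correct and follows essentially the same approach as the paper: expand $\Lambda_1$ in the Pauli basis as $\frac1{4}(\id_2^{\otimes 4}+Z^{\otimes 2}\otimes\id_2^{\otimes 2}+\id_2^{\otimes 2}\otimes Z^{\otimes 2}+Z^{\otimes 4})$, use that single-qubit Clifford conjugation sends $Z$ uniformly to $\{\pm X,\pm Y,\pm Z\}$ so that $\Phi_1^{(k,\mathrm{Cl})}(Z^{\otimes k})=\frac1{3}(X^{\otimes k}+Y^{\otimes k}+Z^{\otimes k})$ for even $k$, rewrite via the identities $X^{\otimes t}+Y^{\otimes t}+Z^{\otimes t}=2^{t/2}\mathbb{F}^{(t)}-\id_2^{\otimes t}$, and collect terms. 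The paper packages the twirling step as a separate sub-lemma but the content and the arithmetic are identical to yours.
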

We prove Lemma \ref{lemma:SingleLambda} by giving the general $t$-copy twirling result of $\Phi^{(t,\mathrm{Cl})}_{1}$ on any Pauli operator in Appendix \ref{app:lemmaSL}. 
Note that the diagonal operator in Eq.~\eqref{eq:Lambdan} can be decomposed to $\Lambda_{n}=\otimes_{i=1}^n\Lambda_{1,(i)}$, and the twirling result in Eq.~\eqref{eq:Gamma2new} is thus in the tensor-product form $\otimes_{i=1}^n\Phi^{(4,\mathrm{Cl})}_{1,{(i)}}(\Lambda_{1,(i)})$. Then one can prove Proposition \ref{prop:pauliG2P} by applying the result of Lemma \ref{lemma:SingleLambda} to Eq.~\eqref{eq:Gamma2new} and further calculation, which is left in Appendix \ref{app:propPauli}.

 Inserting $\Gamma_1^{\mathrm{Pauli}}(\sigma,P)=3^w$ \cite{huang2020predicting} and the result of $\Gamma_2^{\mathrm{Pauli}}(\sigma,P)$ of Proposition \ref{prop:pauliG2P} into Eq.~\eqref{XSN1}, one has the following exact variance.

\begin{theorem}\label{th:Pauli}
    Suppose $O=P$ is a Pauli observable with weight $w$, the variance to measure $P$ using random Pauli measurements is
\begin{equation}\label{eq:pauli}
    \begin{aligned}
\mathrm{Var}\left[\tr(P \widehat{\rho})\right]
=\frac1{M} \left[\frac1{K}3^w+(1-\frac1{K})3^w\tr(P \rho )^2-\tr( P \rho)^2\right].
    \end{aligned}
\end{equation}
\end{theorem}
Theorem \ref{th:Pauli} shows that the dependence of $\mathrm{Var}\left[\tr(P \widehat{\rho})\right]$ on the shot-number $K$ is related to $\tr(P \rho)^2$.
For the extreme cases when $\tr(\rho O)^2=0/1$, 
\begin{equation}\label{eq:pauli-seperate}
\mathrm{Var}\left[\tr(P  \widehat{\rho})\right]=\left\{\begin{aligned}
&\frac1{M}\frac1{K}3^w \ &\tr(P\rho)^2=0,\\
&\frac1{M} (3^w-1) \ &\tr(P\rho)^2=1.
    \end{aligned}\right.
\end{equation}
One can see that when $\tr(P\rho)^2=1$, the variance is independent of the shot-number $K$; for $\tr(P\rho)^2=0$, the dependence of $K$ is the same as the setting-number $M$. So it is advantageous to increase $K$ for $\tr(P\rho)^2$ being small, by considering that it is more convenient to repeat shots than change measurement settings in a real experiment.  The variance result here can be extended to any $w$-local observable following the proof routine in Ref.~\cite{huang2020predicting}, and this is investigated in Ref.~\cite{seif2023shadow}. We give more discussions on this point in Appendix \ref{App:GeneralO}.

In Fig. \ref{fig:Pauli-numeric}, we show numerical results of the variance dependence on $M$, $K$ and $w$, where we perform random Pauli measurements on the Greenberger–Horne–Zeilinger(GHZ) states. Please see Appendix \ref{AppendNum} for details of the numerical simulation, and also extended discussions about the reduction of experiment time-cost. The numerical results are consistent with the variance given by Eq.~\eqref{eq:pauli} and \eqref{eq:pauli-seperate}. 

\comments{
numerically study the dependence of $\mathrm{Var}\left[\tr(P \widehat{\rho})\right]$ on the shot-number $K$, the setting-number $M$ and the weight $w$. The simulations are implemented on quantum simulators, where we perform random Pauli measurements on GHZ states and use the measurement results to estimate the expectation value of Pauli observable $P$. As illustrated in Fig. \ref{fig:Pauli-numeric}, the statistical variance of $\tr(P \widehat{\rho})$}

\section{Variance analysis for Clifford measurements}
In this section, we analyze the statistical variance of random Clifford measurements. In this case, 
the inverse channel maps $\mc{M}^{-1}(O_0)=(D+1)O_0$, and $\Gamma_2(\sigma,O_0)$ in Eq.~\eqref{eq:Gamma2new} becomes 
\begin{equation}\label{eq:gammaClifford}
\begin{aligned}
\Gamma_2^{\mathrm{Cl}}(\sigma,O_0)=(D+1)^2\tr[\sigma\otimes O_0 \otimes \sigma\otimes O_0\  \Phi^{(4,\mathrm{Cl})}_n\left( \Lambda_n \right)].
    \end{aligned}
\end{equation}
For the simplicity of the calculation, we decompose $\Lambda_n=\Lambda_n^0+\Lambda_n^1$, with 
\begin{equation}\label{}
\begin{aligned}
&\Lambda_n^0=\sum_{\mb{b}} \ket{\mb{b}}\bra{\mb{b}}^{\otimes 4},\\
&\Lambda_n^1=\sum_{\mb{b}\neq \mb{b}'} \ket{\mb{b}}\bra{\mb{b}}^{\otimes 2}\otimes \ket{\mb{b}'}\bra{\mb{b}'}^{\otimes 2}.
    \end{aligned}
\end{equation}
To calculate $\Gamma_2^{\mathrm{Cl}}(\sigma,O_0)$, one should apply the 4-fold twirling result of the Clifford group \cite{zhu2016clifford} on $\Lambda_n^0$ and $\Lambda_n^1$, respectively. And we show in Appendix \ref{app:Cl} that $\Lambda_n^1$ contributes to the leading term of the final result. Since the $n$-qubit Clifford twirling is quite sophisticated, we do not calculate it in a very exact form but give the following upper bound on $\Gamma_2^{\mathrm{Cl}}(\sigma,O_0)$.

\begin{figure}[h]
\centering
\includegraphics[width=0.75\textwidth]{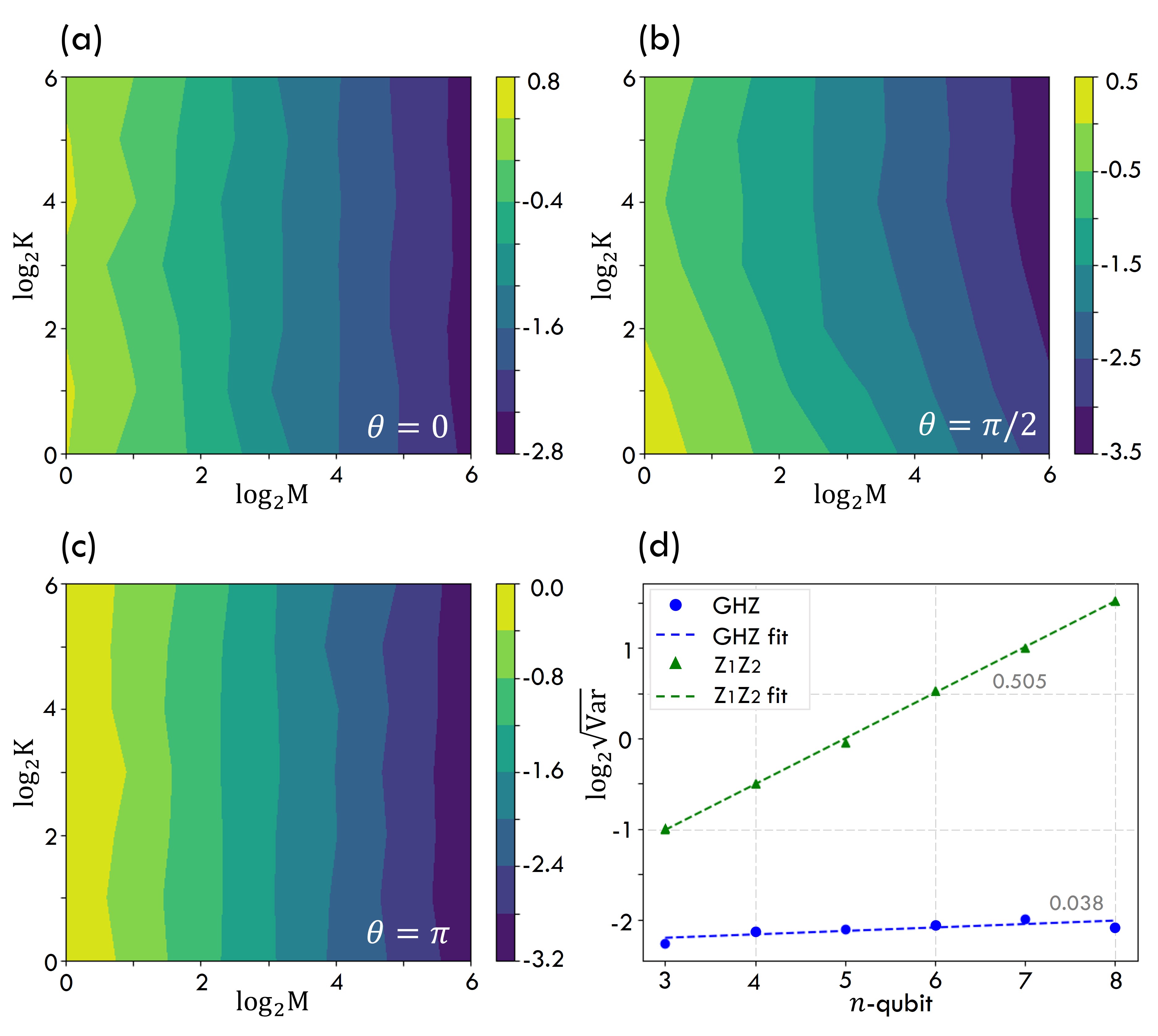}
\caption{Scaling of the statistical variance of random Clifford measurements with $K$, $M$ and qubit number $n$ for different processed state $\rho$ and observables. In (a), (b), and (c) the processed state $\rho$ is a 5-qubit $GHZ_\theta$ state, with $\theta$ equals to $0$, $\pi/2$ and $\pi$ respectively. and the observable is $O=\ket{GHZ_{\theta=0}}\bra{GHZ_{\theta=0}}$, with the expectation value $\tr(O\rho)=(1+\cos\theta)/2$ being $1$, $0.5$ and $0$ respectively. 
In all three cases, the variance is almost independent of the shot-number $K$, in contrast to Fig.~\ref{fig:Pauli-numeric} (b) of Pauli measurements.
In (d), we explore the variance dependence on qubit number $n$ for different observables when $M=K=32$. The blue dots represent the case when the observable $O=\ket{GHZ_{\theta=0}}\bra{GHZ_{\theta=0}}$, and the green triangles are the case when $O=\sigma_z^{\otimes 2}\otimes \id_2^{\otimes(n-2)}$.  Note that the corresponding Frobenius norm $\|O_0\|_2$ is about a constant and $2^{n/2}$, respectively. The dotted lines and the slopes represent the fitting curves for the corresponding cases, which is consistent with Eq.~\eqref{eq:Cl:var}.}
\label{fig:clifford-numeric}
\end{figure}

\begin{prop}\label{prop:Cl}
    Suppose $O_0$ is a traceless observable, the function $\Gamma_2$ defined in Eq.~\eqref{Gamma12} is upper bounded by
\begin{equation}\label{eq:prop:Cl}
    \begin{aligned}
\Gamma_2^{\mathrm{Cl}}(\sigma,O_0)\leq c \|O_0\|_2^2,
\end{aligned}
\end{equation}
for the random Clifford measurements,
where $\|A\|_2=\sqrt{\tr(AA^{\dag})}$ is the Frobenius norm, and $c$ is some constant independent of the dimension $D$. In this case, the XSnorm $\|O_0\|^{\mathrm{Cl}}_{\mathrm{Xshadow}}\leq \sqrt{c} \|O_0\|_2$ by Eq.~\eqref{eq:Xshadow}.
\end{prop}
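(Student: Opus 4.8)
The plan is to try first a short convexity estimate that sidesteps the Clifford representation theory entirely, and only if a sharp constant is wanted to fall back on the $4$-fold twirl.

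For the quick route, fix the Clifford unitary $U$ and set $p_{\mb{b}}(U)=\bra{\mb{b}}U\sigma U^{\dag}\ket{\mb{b}}$, which is a probability distribution over the computational basis, and $m_{\mb{b}}(U)=\bra{\mb{b}}U\mc{M}^{-1}(O_0)U^{\dag}\ket{\mb{b}}$. Reading off Eq.~\eqref{Gamma12}, $\Gamma_1(\sigma,O_0)=\mathbb{E}_U\sum_{\mb{b}}p_{\mb{b}}(U)\,m_{\mb{b}}(U)^2$ while $\Gamma_2(\sigma,O_0)=\mathbb{E}_U\big(\sum_{\mb{b}}p_{\mb{b}}(U)\,m_{\mb{b}}(U)\big)^2$. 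Jensen's inequality for the convex map $t\mapsto t^2$ and the distribution $p(U)$ gives, pointwise in $U$, $\big(\sum_{\mb{b}}p_{\mb{b}}m_{\mb{b}}\big)^2\le\sum_{\mb{b}}p_{\mb{b}}m_{\mb{b}}^2$, hence $\Gamma_2(\sigma,O_0)\le\Gamma_1(\sigma,O_0)$ for every state $\sigma$ — and this holds for any measurement primitive, not just the Clifford one, so $\|O_0\|_{\mathrm{Xshadow}}\le\|O_0\|_{\mathrm{shadow}}$ in general. Since for the global Clifford ensemble Huang \emph{et al.}~\cite{huang2020predicting} prove $\|O_0\|^2_{\mathrm{shadow}}\le 3\,\tr(O_0^2)=3\|O_0\|_2^2$, we immediately obtain $\Gamma_2^{\mathrm{Cl}}(\sigma,O_0)\le 3\|O_0\|_2^2$, i.e.\ Eq.~\eqref{eq:prop:Cl} with $c=3$, together with $\|O_0\|^{\mathrm{Cl}}_{\mathrm{Xshadow}}\le\sqrt{3}\,\|O_0\|_2$.

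To pin down the best constant and, more to the point, to see \emph{how close} $\Gamma_2^{\mathrm{Cl}}$ sits to $\Gamma_1^{\mathrm{Cl}}$ (which is what decides whether raising the shot number $K$ can ever help), the Jensen bound is too lossy and one must evaluate the twirled trace in Eq.~\eqref{eq:gammaClifford} directly. Here I would: (i) by linearity split $\Lambda_n=\Lambda_n^0+\Lambda_n^1$; (ii) expand $\Phi^{(4,\mathrm{Cl})}_n(\Lambda_n^{0})$ and $\Phi^{(4,\mathrm{Cl})}_n(\Lambda_n^{1})$ in the commutant of $\{V^{\otimes 4}:V\in\mathrm{Cl}(n)\}$, which by \cite{zhu2016clifford,gross2021schur} is spanned by a fixed, $n$-independent family $\{r(T)=R(T)^{\otimes n}\}$ of ``stochastic-Lagrangian'' operators (the $S_4$ permutation operators plus a few extra ones, among them the global analogue of the $\mathbb{F}^{(4)}$ operator of Lemma~\ref{lemma:SingleLambda}); (iii) note that since $r(T)$ and $\Lambda_n^{0/1}$ are all tensor powers over the $n$ qubits, the Gram entries $\tr[r(T)^{\dag}r(T')]$ and the overlaps $\tr[\Lambda_n^{0/1}\,r(T')^{\dag}]$ are explicit simple functions of $D=2^n$ (powers of $D$ up to lower-order terms), so the expansion coefficients $c_T(D)$ follow by inverting a finite linear system; (iv) substitute back and evaluate $\tr[(\sigma\otimes O_0\otimes\sigma\otimes O_0)\,r(T)]$ by the cycle formula for permutations and the analogous contraction for the extra operators, using $\tr(O_0)=0$ to kill every term in which a factor $O_0$ is traced out on its own, which leaves a combination of the scalars $\tr(\sigma O_0)^2$, $\tr(\sigma^2)\tr(O_0^2)$, $\tr(\sigma O_0\sigma O_0)$, $\tr(O_0^2\sigma)$ and variants, each bounded by $\|O_0\|_2^2$; and (v) check that $\Lambda_n^1$ gives the leading contribution while $\Lambda_n^0$ is subleading, as announced in Appendix~\ref{app:Cl}. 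Maximizing over $\sigma$ then yields the XSnorm bound.

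The main obstacle in this second route is the $D$-bookkeeping through steps (iii)--(iv): the commutant basis $\{r(T)\}$ is non-orthogonal with a layered hierarchy of trace overlaps (diagonal entries spanning several powers of $D$, not a single one), so verifying that, after the $\tr(O_0)=0$ cancellations, the surviving prefactors $(D+1)^2c_T(D)$ stay $O(1)$ — and, crucially, that the residual scalars are controlled by $\|O_0\|_2$ rather than $\|O_0\|_1$ (which would leak a spurious $\sqrt{D}$) — is delicate; the Parseval identity $\sum_{P}\tr(O_0P)^2=D\|O_0\|_2^2$ over the Pauli basis is what ultimately produces the $\|O_0\|_2^2$ scaling. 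Handling the non-permutation operators $r(T)$, whose trace against a product of four operators does not reduce to the $S_4$ cycle formula, is where the Schur--Weyl-type description of the Clifford group from \cite{zhu2016clifford,gross2021schur} is genuinely needed and is the most technical part. (The finitely many small cases $n\le 2$, where the commutant is smaller, are absorbed into $c$ in either route.)
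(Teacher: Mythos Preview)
Your quick route is correct and is a genuinely different, much more elementary argument than the paper's. The Jensen step $\big(\sum_{\mb{b}}p_{\mb{b}}m_{\mb{b}}\big)^2\le\sum_{\mb{b}}p_{\mb{b}}m_{\mb{b}}^2$ indeed gives $\Gamma_2(\sigma,O)\le\Gamma_1(\sigma,O)$ for \emph{any} ensemble and any state, hence $\|O\|_{\mathrm{Xshadow}}\le\|O\|_{\mathrm{shadow}}$ in general, and the known Clifford shadow-norm bound $\|O_0\|_{\mathrm{shadow}}^2\le 3\|O_0\|_2^2$ then yields Proposition~\ref{prop:Cl} with the explicit constant $c=3$. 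The paper does not use this shortcut: it goes through the full $4$-fold Clifford twirl of Eq.~\eqref{eq:CliffordTt} (the $Q/Q^\perp$ projector formalism of Zhu \emph{et al.}~\cite{zhu2016clifford} rather than the stochastic-Lagrangian basis you mention, though these describe the same commutant), splits $\Lambda_n=\Lambda_n^0+\Lambda_n^1$, and bounds each contribution separately via a technical lemma (Lemma~\ref{prop:piO2Qnew}) that controls traces of the form $\tr(\sigma\otimes O_0\otimes\sigma\otimes O_0\,QP_0P_1)$ by $D^{-1}\tr(O_0^2)$.

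What each buys: your Jensen argument proves the proposition in two lines and even gives a clean explicit constant, which the paper's $\mc{O}(1)$ bookkeeping does not. But, as you correctly anticipate, it is silent on the \emph{lower} side: the paper's twirl computation exposes the internal structure of $\Gamma_2^{\mathrm{Cl}}$ (the $\Lambda_n^1$ piece dominating at order $\mc{O}(1)\|O_0\|_2^2$ with $\Lambda_n^0$ subleading by a factor $D^{-1}$), which is what supports the paper's physical conclusion that $\Gamma_2^{\mathrm{Cl}}$ is of the \emph{same} order as $\Gamma_1^{\mathrm{Cl}}$ and hence increasing $K$ brings little benefit. Your outlined second route is essentially the paper's strategy; the ingredient you flag as most delicate --- controlling the non-permutation commutant elements --- is handled in the paper precisely by Lemma~\ref{prop:piO2Qnew} together with the Pauli-basis Parseval identity you mention.
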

For comparison, in the original shadow \cite{huang2020predicting},
\begin{equation}\label{}
    \begin{aligned}
\Gamma_1^{\mathrm{Cl}}(\sigma,O_0)&=\frac{D+1}{D+2}\left[\|O_0\|_2^2+2\tr(\sigma O_0^2)^2\right]\leq 3\|O_0\|_2^2,
\end{aligned}
\end{equation}
and $\|O_0\|_{\mathrm{shadow}}\leq \sqrt{3}\|O_0\|_2$, which is qualitatively same to the current multi-shot result.
The proof of Proposition \ref{prop:Cl} is left in Appendix \ref{app:Haar} and \ref{app:Cl}. In Appendix \ref{app:Haar}, we first analyze $\Gamma_2^{\mathrm{Haar}}(\sigma,O_0)$ in the Haar random case, where the global unitary is sampled from Haar ensemble or any unitary 4-design ensemble. We then extend the result to the Clifford ensemble in Appendix \ref{app:Cl}. For both Haar and Clifford ensembles, we can bound $\Gamma_2(\sigma,O_0)$ in the order $\mc{O}(1)\|O_0\|_2^2$ as in Proposition \ref{prop:Cl}.

Inserting Eq.~\eqref{eq:prop:Cl} in Eq.~\eqref{XSN1}, one has the following variance result.
\begin{theorem}
    For a general observable $O$, the variance to measure $O$ using the random Clifford measurements is upper bounded by
\begin{equation}\label{eq:Cl:var}
    \begin{aligned}
\mathrm{Var}\left[\tr(O \widehat{\rho})\right]
&\leq\frac1{M} \left[\frac{3}{K}+(1-\frac1{K})c\right]\|O_0\|_2^2\\
&\leq \frac{c'}{M}\|O_0\|_2^2,
    \end{aligned}
\end{equation}
with $c'=c+3$ some constant independent of the dimension $D$, and $\|A\|_2=\sqrt{\tr(AA^{\dag})}$ is the Frobenius norm.
\end{theorem}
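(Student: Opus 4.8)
The plan is to read this theorem off as a direct corollary of the general variance identity in Theorem~\ref{Th:XSN} together with the two moment bounds $\Gamma_1^{\mathrm{Cl}}\le 3\|O_0\|_2^2$ and $\Gamma_2^{\mathrm{Cl}}\le c\|O_0\|_2^2$; essentially all the substantive content sits in Proposition~\ref{prop:Cl}, and the statement below is just its packaging.

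Concretely, I would first specialize Eq.~\eqref{XSN1} to the Clifford primitive, writing
\begin{equation}
\mathrm{Var}\left[\tr(O\hat{\rho})\right]=\frac1{M}\left[\frac1{K}\Gamma_1^{\mathrm{Cl}}(\rho,O_0)+\left(1-\frac1{K}\right)\Gamma_2^{\mathrm{Cl}}(\rho,O_0)-\tr(O_0\rho)^2\right],
\end{equation}
and then discard the subtracted term $\tr(O_0\rho)^2\ge 0$, which only enlarges the right-hand side. Next I would insert the first-moment bound $\Gamma_1^{\mathrm{Cl}}(\rho,O_0)\le 3\|O_0\|_2^2$ recalled just above Proposition~\ref{prop:Cl} (it follows from the exact formula for $\Gamma_1^{\mathrm{Cl}}$ after bounding the state-dependent term and using $\tfrac{D+1}{D+2}<1$), together with the second-moment bound $\Gamma_2^{\mathrm{Cl}}(\rho,O_0)\le c\|O_0\|_2^2$ from Proposition~\ref{prop:Cl}, where $c$ is a $D$-independent constant. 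Since $1-\tfrac1{K}\ge 0$ for every integer $K\ge 1$, this yields the first line of Eq.~\eqref{eq:Cl:var}.

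For the second, $K$-uniform line I would just note the elementary inequality, valid for all integers $K\ge 1$,
\begin{equation}
\frac3{K}+\left(1-\frac1{K}\right)c=c+\frac{3-c}{K}\le c+\max\{0,\,3-c\}\le c+3=:c',
\end{equation}
whose right-hand side is manifestly independent of $K$, and of $D$ because $c$ is. Combining the two displays gives $\mathrm{Var}\left[\tr(O\hat{\rho})\right]\le \tfrac{c'}{M}\|O_0\|_2^2$, as claimed.

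The step I expect to be the real obstacle is none of the above — it is the already-established Proposition~\ref{prop:Cl}, i.e.\ showing that the $(D+1)^2$ prefactor in $\Gamma_2^{\mathrm{Cl}}(\sigma,O_0)=(D+1)^2\tr[\sigma\otimes O_0\otimes\sigma\otimes O_0\,\Phi^{(4,\mathrm{Cl})}_n(\Lambda_n)]$ is compensated by the suppression of the relevant components of the $4$-fold Clifford twirl acting on $\Lambda_n=\Lambda_n^0+\Lambda_n^1$. That requires the representation theory of the Clifford group on the $4$-copy space (Appendices~\ref{app:Haar} and \ref{app:Cl}), since the Clifford ensemble is not a unitary $4$-design. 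Granting that proposition, the theorem is immediate.
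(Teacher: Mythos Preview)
Your proposal is correct and matches the paper's approach exactly: the paper simply states ``Inserting Eq.~\eqref{eq:prop:Cl} in Eq.~\eqref{XSN1}'' together with the already-quoted bound $\Gamma_1^{\mathrm{Cl}}(\sigma,O_0)\le 3\|O_0\|_2^2$, and you have spelled out precisely these substitutions along with the trivial drop of $\tr(O_0\rho)^2\ge 0$. Your derivation of $c'=c+3$ is also fine (and could be shortened to $\tfrac{3}{K}\le 3$ and $(1-\tfrac{1}{K})c\le c$).
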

In the original shadow, the variance bound is about $\mc{O}(1)M^{-1}\|O_0\|_2^2$. Even though Eq.~\eqref{eq:Cl:var} is an upper bound, it already gives a hint that it is not very helpful to increase the shot-number $K$ in the random Clifford measurements. We demonstrate this phenomenon with numerical simulation considering measuring the fidelity. In Fig.~\ref{fig:clifford-numeric}, we take the state $\rho=\ket{GHZ_{\theta}}\bra{GHZ_{\theta}}$, with $\ket{GHZ_{\theta}}=1/\sqrt{2}(\ket{0}^{\otimes n}+e^{i\theta}\ket{1}^{\otimes n})$, i.e., the GHZ state with some phase. And the observable is taken to be $O=\ket{GHZ_{\theta=0}}\bra{GHZ_{\theta=0}}$, that is, measuring the fidelity of the quantum state to the GHZ state. We find that the variance is almost independent with the shot-number $K$ for various choices of $\theta$ which gives different expectation values of $\tr(O\rho)$. This is very different from the results of random Pauli measurements as shown in Fig.~\ref{fig:Pauli-numeric}.

\section{Conclusion and outlook}
In this work, we systemically analyze the statistical performance of multi-shot shadow estimation by introducing a key quantity \textemdash the cross-shadow-norm. We find the advantage of this framework in Pauli measurements, however, the advantage in Clifford measurements is not significant. There are a few interesting points that merit further investigation. First, the application of the result of Pauli measurements to quantum chemistry problems is promising, especially combined with other measurement techniques, such as derandomization and local-bias \cite{huang2021efficient,hadfield2022measurements}. Second, considering that the advantage of multi-shot for Clifford measurements is minor, it is interesting to investigate whether this hold or not for shallow circuit \cite{hu2023classical,bertoni2022shallow,arienzo2022closed} or restricted unitary evolution \cite{Hu2022Hamiltonian,tran2023measuring,mcginley2022shadow,van2022hardware}. Third, it is also intriguing to extend the current analysis to boson \cite{Gandhari2022CV,Becker2022cv,gu2022efficient} and fermion systems \cite{Zhao2021Fermionic,Low2022fermion}, and to nonlinear functions of quantum states \cite{elben2020mixedstate,zhou2022hybrid,liu2022detecting}. Finally, we expect the result here would also benefit the statistical analysis for other randomized measurement tasks, especially those related to high-order functions \cite{singlezhou,Andreas2019Multipartite,Yu2021Moments,Zhenhuan2022correlation}.

While completing this work, we became aware of a related work \cite{Helsen2022Thrifty} by Jonas Helsen and Michael Walter which also considers shadow estimation under multi-shot.
Different from Ref.~\cite{Helsen2022Thrifty}, we analyze the performance under Pauli measurements. Our negative conclusion on Clifford measurements is consistent with theirs. Ref. \cite{Helsen2022Thrifty} also demonstrates the weakness of Clifford measurements by giving the exact variance for estimating stabilizer-fidelity. Additionally, they nicely extend the Clifford unitary to doped Clifford circuits to bypass this weakness.
 
\section{Acknowledgements}
Y.Z. thanks Huangjun Zhu for the useful discussion on unitary design.
This work is supported by NSFC Grant No.~12205048 and startup funding from Fudan University. 





\begin{appendix}

\section{Proofs of cross-shadow-norm}
\subsection{Proof of cross-shadow-norm as a norm}\label{app:CSN}
First one has $\|0\|_{\mathrm{Xshadow}}=0$, and then we verify the triangle inequality as follows.
\begin{equation}
    \begin{aligned}
     &\|O^1+O^2\|_{\mathrm{Xshadow}}^2\\
    =& \mathbb{E}_U \sum_{\mb{b},\mb{b}'}\bra{\mb{b}}U\sigma U^{\dag}\ket{\mb{b}}\bra{\mb{b}}U[\mc{M}^{-1}(O^1)+\mc{M}^{-1}(O^2)]U^{\dag}\ket{\mb{b}}\\
    &\ \ \ \ \ \ \ \ \ \times \bra{\mb{b}'}U\sigma U^{\dag}\ket{\mb{b}'}
    \bra{\mb{b}'}U[\mc{M}^{-1}(O^1)+\mc{M}^{-1}(O^2)]U^{\dag}\ket{\mb{b}'}\\
       =& \Gamma_2(\sigma,O^1)+\Gamma_2(\sigma,O^2)
       \\&+2\mathbb{E}_U \sum_{\mb{b},\mb{b}'}\bra{\mb{b}}U\sigma U^{\dag}\ket{\mb{b}}\bra{\mb{b}}U\mc{M}^{-1}(O^1)U^{\dag}\ket{\mb{b}}\bra{\mb{b}'}U\sigma U^{\dag}\ket{\mb{b}'}
       \bra{\mb{b}'}U\mc{M}^{-1}(O^2)U^{\dag}\ket{\mb{b}'}\\
       \leq &\|O^1\|_{\mathrm{Xshadow}}^2+\|O^2\|_{\mathrm{Xshadow}}^2\\
       &+ 2\mathbb{E}_U \sum_{\mb{b}}\bra{\mb{b}}U\sigma U^{\dag}\ket{\mb{b}}\bra{\mb{b}}U\mc{M}^{-1}(O^1)U^{\dag}\ket{\mb{b}}\sum_{\mb{b}'}\bra{\mb{b}'}U\sigma U^{\dag}\ket{\mb{b}'}
       \bra{\mb{b}'}U\mc{M}^{-1}(O^2)U^{\dag}\ket{\mb{b}'}\\
       \leq &\|O^1\|_{\mathrm{Xshadow}}^2+\|O^2\|_{\mathrm{Xshadow}}^2+2\sqrt{\Gamma_2(\sigma,O^1)}\sqrt{\Gamma_2(\sigma,O^2)}\\
       \leq &\|O^1\|_{\mathrm{Xshadow}}^2+\|O^2\|_{\mathrm{Xshadow}}^2+2\sqrt{\|O^1\|_{\mathrm{Xshadow}}\|O^2\|_{\mathrm{Xshadow}}}\\
       =&(\|O^1\|_{\mathrm{Xshadow}}+\|O^2\|_{\mathrm{Xshadow}})^2.
    \end{aligned}
\end{equation}
Here the first and third inequalities are due to the definition of XSnorm in Eq.~\eqref{eq:Xshadow}, and $\sigma$ is assumed to be the optimal state in the maximization for $O^1+O^2$, but may be not the optimal one for $O^1$ and $O^2$ respectively. The second inequality is by the Cauchy–Schwarz inequality for the domain of the random unitary $U$. That is, we can take the formula of the cross term as the inner product of two vectors indexed by $U$.

\subsection{Proof of Theorem \ref{Th:XSN}}\label{app:ThXSN}
As $\widehat{\rho}_{(i)}$ are $M$ i.i.d. random variables, we only need to focus on a specific $i_0$, and get the final variance by directly dividing it by $M$.
By definition the variance of  $\tr(O  \widehat{\rho}_{(i_0)})$ shows
\begin{equation}
    \begin{aligned}
     \mathrm{Var}\left[\tr(O  \widehat{\rho}_{(i_0)})\right]&=\mathbb{E}\ \tr(O\widehat{\rho}_{(i_0)})^2- \tr(O\rho)^2\\
     &=\mathbb{E}\ \tr(O_0\widehat{\rho}_{(i_0)})^2- \tr(O_0\rho)^2.
    \end{aligned}
\end{equation}
Here one can shift the operator to its traceless part without changing the variance. The expectation value can be written explicitly as
\begin{equation}\label{}
    \begin{aligned}
       \mathbb{E}\ \tr(O\widehat{\rho}_{(i_0)})^2&=\mathbb{E}\ \left[\frac1{K}\sum_j\tr(O\widehat{\rho}^{(j)}_{(i_0)})\right]^2\\
       &=\frac1{K^2}\sum_{j,j'}\ \mathbb{E}\  \tr(O\widehat{\rho}^{(j)}_{(i_0)})\tr(O\widehat{\rho}^{(j')}_{(i_0)}).
    \end{aligned}
\end{equation}

The expectation value of the terms in the summation depends on the coincidence of the index $j$.
For $j=j'$ with totally $K$ terms, one has 
\begin{equation}\label{}
    \begin{aligned}
    &\mathbb{E}\  \tr(O\widehat{\rho}^{(j)}_{(i_0)})\tr(O\widehat{\rho}^{(j)}_{(i_0)})\\
       =&\mathbb{E}\  \bra{\hat{\mb{b}}^{(j)}}U\mc{M}^{-1}(O)U^{\dag}\ket{\hat{\mb{b}}^{(j)}}^2  \\
       =&\mathbb{E}_U\sum_{\mb{b}^{(j)}}\mathrm{Pr}(\mb{b}^{(j)}|U)\ \bra{\mb{b}^{(j)}}U\mc{M}^{-1}(O)U^{\dag}\ket{\mb{b}^{(j)}}
       ^2\\
       =& \mathbb{E}_U \sum_{\mb{b}^{(j)}} \bra{\mb{b}^{(j)}}U\rho U^{\dag}\ket{\mb{b}^{(j)}} \bra{\mb{b}^{(j)}}U\mc{M}^{-1}(O)U^{\dag}\ket{\mb{b}^{(j)}}
       ^2=\Gamma_1(\rho,O).
    \end{aligned}
\end{equation}
by definition in Eq.~\eqref{Gamma12}.
Here in the first equality we insert the definition of $\widehat{\rho}^{(j)}_{(i_0)}$, and use the self-adjoint property of the inverse channel $\mc{M}^{-1}$.

For $ j\neq j'$ with totally $K^2-K$ terms, the measurements are under the same setting $i_0$ but for different shots, one has 
\begin{equation}\label{}
    \begin{aligned}
    &\mathbb{E}\  \tr(O\widehat{\rho}^{(j)}_{(i_0)})\tr(O\widehat{\rho}^{(j')}_{(i_0)})\\
       =&\mathbb{E}\  \bra{\mb{b}^{(j)}}U\mc{M}^{-1}(O)U^{\dag}\ket{\hat{\mb{b}}^{(j)}}  \bra{\mb{b}^{(j')}}U\mc{M}^{-1}(O)U^{\dag}\ket{\hat{\mb{b}}^{(j')}}\\
       =&\mathbb{E}_U \sum_{\mb{b}^{(j)},\mb{b}^{(j')}}\mathrm{Pr}(\mb{b}^{(j)}|U)\ \mathrm{Pr}(\mb{b}^{(j')}|U) \ \bra{\mb{b}^{(j)}}U\mc{M}^{-1}(O)U^{\dag}\ket{\hat{\mb{b}}^{(j)}}
       \bra{\mb{b}^{(j')}}U\mc{M}^{-1}(O)U^{\dag}\ket{\hat{\mb{b}}^{(j')}}\\
       =& \mathbb{E}_U \sum_{\mb{b}^{(j)},\mb{b}^{(j')}} \bra{\mb{b}^{(j)}}U\rho U^{\dag}\ket{\mb{b}^{(j)}}\bra{\mb{b}^{(j')}}U\rho U^{\dag}\ket{\mb{b}^{(j')}}\ \bra{\mb{b}^{(j)}}U\mc{M}^{-1}(O)U^{\dag}\ket{\mb{b}^{(j)}}
       \bra{\mb{b}^{(j')}}U\mc{M}^{-1}(O)U^{\dag}\ket{\mb{b}^{(j')}}\\
       =& \Gamma_2(\rho,O),
    \end{aligned}
\end{equation}
by definition in Eq.~\eqref{Gamma12}, and here the subscript $i_0$ is omitted without ambiguity.

As a result, the total variance shows
\begin{equation}
    \begin{aligned}
     \mathrm{Var}\left[\tr(O  \widehat{\rho})\right]&=\frac1{M}\mathrm{Var}\left[\tr(O  \widehat{\rho}_{(i_0)})\right]=\frac1{M}\mathrm{Var}\left[\tr(O_0  \widehat{\rho}_{(i_0)})\right]\\
     &=\frac1{M}\left[\frac1{K}\Gamma_1(\rho,O_0)+(1-\frac1{K})\Gamma_2(\rho,O_0)-\tr(O_0\rho)^2\right],
    \end{aligned}
\end{equation}
and we finish the proof.

\section{Variance analysis of a collection of observables}\label{App:GeneralO}
In this section, we extend the variance results on a single observable $O$ in the main text to a collection of observable $\{O^i\}_{i=1}^L$. In particular, we are interested in the summation of them $H=\sum_i O^i$, and each $O^i=\alpha_i P_i$ could be proportional to a Pauli operator $P_i$. For instance, $H$ can be a Hamiltonian of an $n$-qubit system, and we aim to estimate the expectation value of $H$ within error $\epsilon$. We first give general formulas of the measurement budget or the variance without a specific chosen random measurements, and then show the result of estimating $H$ under random Pauli measurements, which is of practical interest. For simplicity, we denote the estimator based on the shadow $\hat{\rho}$ as $\hat{O}=\tr(O\hat{\rho})$ and the expectation value as $\bar{O}=\tr(O\rho)$ in the following discussion.

We mainly give two solutions to this task. The first one is directly using the median-of-mean technique for all $\{O^i\}_{i=1}^L$ following Theorem S1 of Ref.~\cite{huang2020predicting}. Remembering that we totally collect $M$ i.i.d. estimators $\{\hat{\rho}_{(k)}\}$, and each of them is composed of measurement data from repeated $K$-shots as shown in Eq.~\eqref{rhoij}. For the median-of-mean post-processing, $\{\hat{\rho}_{(k)}\}$ is grouped into $M_1$ sets with each set containing $M_2$ elements (please refer to Eq.~(S12) in Ref.~\cite{huang2020predicting} for more details). The estimator for each $O^i$ is denoted as $\hat{O^i}(M_1,M_2)$, and $\hat{H}(M_1,M_2)$ is thus the summation of them. Different from estimating each $O^i$ within error $\epsilon$ there, we should instead keep the whole error within $\epsilon$. Let $\epsilon_i=\epsilon\|O^i\|_{\infty}/\sum_i\|O^i\|_{\infty}$ and $\epsilon=\sum_i \epsilon_i$, one can find the following result.
\begin{corollary}\label{Co:mofm}
    Set $M_1=2\log(2L/\delta)$ and $M_2=34\epsilon^{-2}\left(\sum_i\|O^i\|_{\infty}\right)^2\max_{j\in[L]} \mathrm{Var}(\hat{O^j}/\|O^j\|_{\infty})$, with the variance dependent on the particular random measurement primitive. A collection of $M=M_1M_2$ shadow estimators $\{\hat{\rho}_{(k)}\}_{k=1}^M$ via the the median-of-mean prediction is sufficient to make the estimating error of $H=\sum_i O^i$, $|\hat{H}(M_1,M_2)-\bar{H}|\leq \epsilon$, under confidence level $1-\delta$.
    
    In particular, for each $O^i=\alpha_i P_i$ and the random Pauli measurements, the number of independent estimators is at most
\begin{equation}\label{eq:Acol}
    \begin{aligned}
M 
=68\log(2L/\delta)\epsilon^{-2}\|\vec{\alpha}\|^2_1\max_{j\in[L]} \mathrm{Var}(\hat{P_j}),
\end{aligned}
\end{equation}
Here $\|\vec{\alpha}\|_1=\sum_i|\alpha_i|$ is the 1-norm of the coefficient vector, and $\mathrm{Var}(\hat{P_j})$ of Pauli operator $P_j$ on a single estimator $\hat{\rho}_{(k_0)}$ given in Eq.~\eqref{eq:pauli} (i.e., $M=1$ there).
\end{corollary}
Here the error of estimating $H$ is bounded by the triangle inequality of each error $\epsilon_i$ of estimating $O^i$.

The second way is to calculate and bound the variance of $\hat{H}$, based on the variance of each $\hat{O^i}$. Let us first define some functions of a state $\sigma$ and two observable $O^1$ and $O^2$ generalized from Eq.~\eqref{Gamma12}.
\begin{equation}\label{Gamma12app}
    \begin{aligned}
&\Gamma_1(\sigma,O^1,O^2):=\mathbb{E}_U \sum_{\mb{b}} \bra{\mb{b}}U\sigma U^{\dag}\ket{\mb{b}} \bra{\mb{b}}U\mc{M}^{-1}(O^1)U^{\dag}\ket{\mb{b}}\bra{\mb{b}}U\mc{M}^{-1}(O^2)U^{\dag}\ket{\mb{b}},\\
&\Gamma_2(\sigma,O^1,O^2):=\mathbb{E}_U  \sum_{\mb{b},\mb{b}'}\bra{\mb{b}}U\sigma U^{\dag}\ket{\mb{b}}\bra{\mb{b}}U\mc{M}^{-1}(O^1)U^{\dag}\ket{\mb{b}}  \bra{\mb{b}'}U\sigma U^{\dag}\ket{\mb{b}'}
       \bra{\mb{b}'}U\mc{M}^{-1}(O^2)U^{\dag}\ket{\mb{b}'}.
\end{aligned}
\end{equation}
One has the following variance result of estimating $H$.
\begin{theorem}\label{Th:VarGapp}
   By using multi-shot shadow estimation with $M$-round and $K$-shot, the statistical variance of $\hat{H}$ with $H=\sum_i O^i$ shows
\begin{equation}\label{VarGapp1}
    \begin{aligned}
\mathrm{Var}(\hat{H})&=\sum_i \mathrm{Var}(\hat{O^i})+2\sum_{i<j} \mathrm{Cov} (\hat{O^i},\hat{O^j}).
\end{aligned}
\end{equation}
The variance of $\mathrm{Var}(\hat{O^i})$ is given in Eq.~\eqref{XSN1}, and the covariance can be calculated via
\begin{equation}\label{coVarGapp}
    \begin{aligned}
\mathrm{Cov} (\hat{O^i},\hat{O^j})=\frac1{M}\left[\frac1{K}\Gamma_1(\rho,O^i,O^j)+(1-\frac1{K})\Gamma_2(\rho,O^i,O^j)-\bar{O^i}\bar{O^j}\right].
\end{aligned}
\end{equation}
\end{theorem}
The proof follows similarly to that of Theorem \ref{Th:XSN} in Sec.~\ref{app:ThXSN}. In principle, one can calculate $\mathrm{Cov} (\hat{O^i},\hat{O^j})$ via Eq.~\eqref{coVarGapp}, similar as we have done in main text for $\mathrm{Var}(\hat{O^i})$. For the random Pauli measurements, Ref.~\cite{huang2020predicting} and Ref.~\cite{seif2023shadow} investigate $\Gamma_1(\rho,O^i,O^j)$ and $\Gamma_2(\rho,O^i,O^j)$ for two Pauli observables, respectively. 

Here for a supplement, we further give some bound for the total variance $\mathrm{Var}(\hat{H})$. Note that $\mathrm{Cov} (\hat{O^i},\hat{O^j})\leq \sqrt{\mathrm{Var}(\hat{O^i})\mathrm{Var}(\hat{O^j})}$ by Cauchy–Schwarz inequality, and thus one has 
$\mathrm{Var}(\hat{H})\leq \left[\sum_i \sqrt{\mathrm{Var}(\hat{O^i})} \right]^2$. Applying this to random Pauli measurements, one finally gets the following result.

\begin{corollary}\label{}
    For $H=\sum_i \alpha_i P_i$ and the random Pauli measurements, the variance of estimating $H$ is upper bounded by 
\begin{equation}\label{VarHPauli}
    \begin{aligned}
\mathrm{Var}(\hat{H})\leq \left[\sum_i \sqrt{\mathrm{Var}(\alpha_i\hat{P_i})} \right]^2\leq \|\vec{\alpha}\|^2_1\max_{j\in[L]} \mathrm{Var}(\hat{P_j}).
\end{aligned}
\end{equation}
with $\mathrm{Var}(\hat{P_j})$ given in Eq.~\eqref{eq:pauli}.
\end{corollary}
The second inequality in Eq.~\eqref{VarHPauli} is by the Holder's inequality ($p=1,q=\infty$). 

By adopting the median-of-mean technique only for a single observerble $H$, one sets
$M_1=2\log(2/\delta)$ and $M_2=34\epsilon^{-2}\mathrm{Var}(\hat{H})$, with the variance on a single estimator $\hat{\rho}_{(k_0)}$. In this way, a collection shadow estimators $\{\hat{\rho}_{(k)}\}_{i=1}^M$ is sufficient to make the estimator error less than $\epsilon$ under confidence $1-\delta$. By using upper bound in Eq.~\eqref{VarHPauli}, one has $M$ at most being

\begin{equation}\label{app:Mbetter}
    \begin{aligned}
M&=68\log(2/\delta)\epsilon^{-2} \left[\sum_i \sqrt{\mathrm{Var}(\alpha_i\hat{P_i})} \right]^2\\
&\leq 68\log(2/\delta)\epsilon^{-2} \|\vec{\alpha}\|^2_1\max_{j\in[L]} \mathrm{Var}(\hat{P_j}).
\end{aligned}
\end{equation}
It is clear that the upper bound of $M$ obtained here is better than the one in Eq.~\eqref{eq:Acol}. 

We remark that one can also apply Cauchy–Schwarz inequality to Eq.~\eqref{VarHPauli} to find another upper bound $\mathrm{Var}(\hat{H})\leq \|\vec{\alpha}\|^2_2\sum_j \mathrm{Var}(\hat{P_j})$. Note that $\mathrm{Var}(\hat{P_j})$ is a function of $w_i$, $\bar{P_i}$ and the shot-number $K$. One may further utilizes the commuting relation among the operators $\{P_i\}$ of interest to minimize the measurement budget, and we leave it to some further study.

\section{Proof for random Pauli measurements}\label{}
\subsection{Proof of Lemma \ref{lemma:SingleLambda}}\label{app:lemmaSL}
We first decompose $\Lambda_1$ defined in Lemma \ref{lemma:SingleLambda} into the Pauli operator representation as follows.
\begin{equation}\label{}
\begin{aligned}
\Lambda_1&= \sum_{b}\ket{b}\bra{b}^{\otimes 2}\sum_{b'}\ket{b'}\bra{b'}^{\otimes 2}\\
&=\frac1{2}(\id_2\otimes \id_2+Z\otimes Z)\otimes \frac1{2}(\id_2\otimes \id_2+Z\otimes Z)\\
 &=\frac1{4}(\id_2^{\otimes 4}+\id_2^{\otimes 2}Z^{\otimes 2}+Z^{\otimes 2}\id_2^{\otimes 2}+Z^{\otimes 4})
    \end{aligned}
\end{equation}
Here $\id_2$ and $Z$ are single-qubit identity and Pauli $Z$ operator. By inserting this decomposition inside the 4-copy single-qubit Clifford twirling channel, 
\begin{equation}\label{}
\begin{aligned}
 &\Phi^{(4,\mathrm{Cl})}_1[\frac1{4}(\id_2^{\otimes 4}+\id_2^{\otimes 2}Z^{\otimes 2}+Z^{\otimes 2}\id_2^{\otimes 2}+Z^{\otimes 4})]\\
 =&\frac1{4}[\id_2^{\otimes 4}+\id_2^{\otimes 2}\otimes \Phi_1^{(2,\mathrm{cl})}(Z^{\otimes 2})+\Phi_1^{(2,\mathrm{cl})}(Z^{\otimes 2})\otimes \id_2^{\otimes 2}+\Phi_1^{(4,\mathrm{cl})}(Z^{\otimes 4})]\\
 =& \frac1{4}[\id_2^{\otimes 4}+\id_2^{\otimes 2}\frac1{3}(2\mbb{F}^{(2)}-\id^{\otimes 2})+\frac1{3}(2\mbb{F}^{(2)}-\id^{\otimes 2})\id_2^{\otimes 2}+(4\mbb{F}^{(4)}-\id_2^{\otimes 4})/3]\\
=&1/6\id_2^{\otimes 2}\mbb{F}^{(2)}+1/6\mbb{F}^{(2)}\id_2^{\otimes 2}+1/3\mbb{F}^{(4)}.
    \end{aligned}
\end{equation}
In the second line: the first term is due to $\Phi_1^{(4,\mathrm{Cl})}$ being unital, the two terms in the middle reduce to 2-copy twirling, and the last term is still a 4-copy twirling. In the third line, we insert the result of the following Lemma.

\begin{lemma}\label{}
The $t$-fold single-qubit Clifford twirling channel maps any Pauli operator $P_1^{\otimes t}$ with $P_1 \in \{X,Y,Z\}$ into
\begin{equation}\label{}
    \Phi_1^{(t,\mathrm{Cl})}(P_1^{\otimes t})=\left\{\begin{aligned}
&\frac1{3}(X^{\otimes t}+Y^{\otimes t}+Z^{\otimes t})=(2^{t/2}\mbb{F}^{(t)}-\id^{\otimes t}_2)/3, \ \ &t\ \mathrm{even}\\
&0, \ \ &t\ \mathrm{odd}
    \end{aligned}\right.
\end{equation}
where in the first line we define $\mbb{F}^{(t)}:=2^{-t/2}(\id_2^{\otimes t}+X^{\otimes t}+Y^{\otimes t}+Z^{\otimes t})$, and $\mbb{F}^{(2)}$ is the swap operator on two qubits.
\end{lemma}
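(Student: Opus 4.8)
The plan is to exploit the elementary structure of the single-qubit Clifford group: modulo the Pauli group (and global phases) it is the symmetric group $S_3$ acting on the three axes $X,Y,Z$ up to signs. Concretely, I would first show that for $C$ drawn uniformly from the single-qubit Clifford group, the conjugate $C^\dagger P_1 C$ is uniformly distributed over the six signed Paulis $\{\pm X,\pm Y,\pm Z\}$, each with probability $1/6$. Granting this, the $t$-fold twirl follows immediately: since $U^{\dagger\otimes t}(P_1^{\otimes t})U^{\otimes t}=(U^\dagger P_1 U)^{\otimes t}$, one has
\begin{equation}
\Phi_1^{(t,\mathrm{Cl})}(P_1^{\otimes t})=\mathbb{E}_C\,(C^\dagger P_1 C)^{\otimes t}=\frac1{6}\sum_{Q\in\{X,Y,Z\}}\left[Q^{\otimes t}+(-Q)^{\otimes t}\right]=\frac{1+(-1)^t}{6}\sum_{Q\in\{X,Y,Z\}}Q^{\otimes t},
\end{equation}
which vanishes for odd $t$ and equals $\tfrac13(X^{\otimes t}+Y^{\otimes t}+Z^{\otimes t})$ for even $t$, as claimed.

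To prove the uniformity, I would write each Clifford as $C=C_0 Q$ with $C_0$ a fixed representative of one of the six cosets of the Pauli group and $Q\in\{\id_2,X,Y,Z\}$. Each coset induces a permutation $\pi\in S_3$ with $C_0^\dagger P_1 C_0=\pm P_{\pi(1)}$; because the quotient homomorphism onto $S_3$ is surjective, each of the three candidate target axes $P_{\pi(1)}$ is realized by exactly two cosets. Right multiplication by $Q$ flips the sign precisely when $Q$ anticommutes with $C_0^\dagger P_1 C_0$, and among $\{\id_2,X,Y,Z\}$ exactly two Paulis commute and two anticommute with any fixed nonidentity Pauli; hence for each coset the sign is $+1$ for two choices of $Q$ and $-1$ for the other two. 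Counting $2$ cosets times $4$ Paulis per target axis, with signs split evenly, yields $4$ Cliffords for each of the six signed Paulis, i.e.\ the uniform distribution. The only mildly delicate point in the whole argument is this sign-balance/transitivity count; everything else is bookkeeping. (Alternatively one can note that $\mathbb{E}_C(C^\dagger P_1 C)^{\otimes t}$ is invariant under conjugation by any single-qubit Clifford by reparametrization $C\mapsto CD$, which forces it into the appropriate sector and pins it down up to the normalization, but the direct coset count is cleaner.)

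Finally, the two restatements in the lemma are purely algebraic. Multiplying the definition $\mbb{F}^{(t)}:=2^{-t/2}(\id_2^{\otimes t}+X^{\otimes t}+Y^{\otimes t}+Z^{\otimes t})$ by $2^{t/2}$ and solving for the Pauli sum gives $\tfrac13(X^{\otimes t}+Y^{\otimes t}+Z^{\otimes t})=(2^{t/2}\mbb{F}^{(t)}-\id_2^{\otimes t})/3$, and the identification of $\mbb{F}^{(2)}$ with the two-qubit swap is the standard identity $\mathbb{S}=\tfrac12(\id_2\otimes\id_2+X\otimes X+Y\otimes Y+Z\otimes Z)$. Plugging $t=2$ and $t=4$ into the formula then reproduces exactly the inputs used in the proof of Lemma \ref{lemma:SingleLambda}.
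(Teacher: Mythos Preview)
Your proof is correct and follows essentially the same route as the paper: both arguments reduce the twirl to the fact that conjugation of a fixed nonidentity Pauli by a uniformly random single-qubit Clifford is uniform on $\{\pm X,\pm Y,\pm Z\}$, then average the six resulting tensor powers. The only difference is that the paper simply asserts this uniformity ``by definition'' of the Clifford group, whereas you supply an explicit coset-counting justification; your added detail is sound and the rest is identical.
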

  \begin{proof}
By definition, random single-qubit Clifford gate takes $P_1 \in \{X,Y,Z\}$ uniformly to all six directions in the Bloch sphere, i.e., $\{\pm X,\pm Y,\pm Z\}$. Consequently, the final result is the average of six terms, i.e., $1/6\sum_{P_1\in \{\pm X,\pm Y,\pm Z\}}P_1^{\otimes t}$. In the even $t$ case, we thus have the equal weight summation of $X^{\otimes t},Y^{\otimes t},Z^{\otimes t}$. In the odd $k$ case, $X^{\otimes t}$ and $(-X)^{\otimes t}$ cancel with each other, same for $Y,Z$ terms, thus it returns zero.
\end{proof}

\subsection{Proof of Proposition \ref{prop:pauliG2P}}\label{app:propPauli}

\begin{proof}
Without loss of generality, suppose $P=P_1\otimes \cdots P_w \otimes \id_{[n-w]}$ with the first $w$-qubit owning Pauli operator, and of course traceless.  
The inverse channel maps $\mc{M}^{-1}(O)=\bigotimes_{i=1}^n \mc{M}_1^{-1}(O)=\bigotimes_{i=1}^w 3P_i \otimes \id_{[n-w]}$.
By definition of Eq.~\eqref{Gamma12}, one has

\begin{equation}\label{}
\begin{aligned}
&\mathbb{E}_U \sum_{\mb{b},\mb{b}'}\bra{\mb{b}}U\sigma U^{\dag}\ket{\mb{b}}\bra{\mb{b}'}U\sigma U^{\dag}\ket{\mb{b}'}\ \bra{\mb{b}}U\mc{M}^{-1}(O)U^{\dag}\ket{\mb{b}}
       \bra{\mb{b}'}U\mc{M}^{-1}(O)U^{\dag}\ket{\mb{b}'}\\
       =&\mathbb{E}_{U=\bigotimes_{i=1}^n u_i} \sum_{b_1,\cdots,b_n;b_1',\cdots,b_n'}\tr[\sigma^{\otimes 2}\ \bigotimes_{i=1}^n (u_i^{\dag}\ket{b_i}\bra{b_i}u_i)\bigotimes_{i=1}^n (u_i^{\dag}\ket{b_i'}\bra{b_i'}u_i)] \\
       &\ \ \ \ \ \ \ \ \ \ \ \ \ \ \ \ \ \ \ \ \ \ \ \ \ \ \ \ \ \ \ \ \times \prod_{i=1}^w\bra{b_i}u_i3P_i u_i^{\dag}\ket{b_i}\prod_{i=1}^w\bra{b_i'}u_i3P_i u_i^{\dag}\ket{b_i'}
      \\=&\mathbb{E}_{\bigotimes_{i=1}^w  u_i} \sum_{b_1,\cdots,b_w;b_1',\cdots,b_w'} \tr[\sigma^{\otimes 2}\ \bigotimes_{i=1}^w  (u_i^{\dag}\ket{b_i}\bra{b_i}u_i) \otimes \id_{[n-w]} \bigotimes_{i=1}^w  (u_i^{\dag}\ket{b_i'}\bra{b_i'}u_i)\otimes \id_{[n-w]}] \\
      &\ \ \ \ \ \ \ \ \ \ \ \ \ \ \ \ \ \ \ \ \ \ \ \ \ \ \ \ \ \ \ \ \times\prod_{i=1}^w\bra{b_i}u_i3P_i  u_i^{\dag}\ket{b_i}\prod_{i=1}^w\bra{b_i'}u_i3P_i u_i^{\dag}\ket{b_i'}\\
      \\=&\mathbb{E}_{\bigotimes_{i=1}^w  u_i} \sum_{b_1,\cdots,b_w;b_1',\cdots,b_w'} \tr[\sigma_{[m]}^{\otimes 2}\ \bigotimes_{i=1}^w  (u_i^{\dag}\ket{b_i}\bra{b_i}u_i) \bigotimes_{i=1}^w  (u_i^{\dag}\ket{b_i'}\bra{b_i'}u_i)] \\
      &\ \ \ \ \ \ \ \ \ \ \ \ \ \ \ \ \ \ \ \ \ \ \ \ \ \ \ \ \ \ \ \ \times\prod_{i=1}^w\bra{b_i}u_i3P_i  u_i^{\dag}\ket{b_i}\prod_{i=1}^w\bra{b_i'}u_i3P_i u_i^{\dag}\ket{b_i'}\\
=& 3^{2w}\tr[(\sigma_{[w]}\bigotimes_{i=1}^w P_i)^{\otimes 2}\  \ \mathbb{E}_{\bigotimes_{i=1}^w  u_i} \sum_{b_1,\cdots,b_w;b_1',\cdots,b_w'}\bigotimes_{i=1}^w[(u_i^{\dag}\ket{b_i}\bra{b_i}u_i)^{\otimes 2} \otimes (u_i^{\dag}\ket{b_i'}\bra{b_i'}u_i)^{\otimes 2}] ]\\
=&3^{2w}\tr[(\sigma_{[w]}\bigotimes_{i=1}^w P_i)^{\otimes 2}\ \bigotimes_{i=1}^w \Phi^{(4,\mathrm{Cl})}_{1,{(i)}}\left(\sum_{b_i}\ket{b_i}\bra{b_i}^{\otimes 2}\sum_{b_i'}\ket{b_i'}\bra{b_i'}^{\otimes 2}\right)]\\
    \end{aligned}
\end{equation}
Here in the first equality, we write the random unitary $U=\bigotimes_{i=1}^n u_i$ and computational basis summation $\mb{b}=\{b_1b_2\cdots b_n\},\mb{b}'=\{b'_1b'_2\cdots b'_n\}$ on the single-qubit level; in the second equality we sum the $b_i,b_i'$ for $i>w$ in the trace, and it gives $\id_{[n-w]}$ on the last $n-w$ qubits no matter what $u_i$ is chosen. And the problem is reduced on the first $w$-qubit in the third equality, with $\sigma_{[m]}$ the reduced state of the first $m$-qubit from $\sigma$. In the last two lines, we arrange these operators and relate the result to the single-qubit Clifford twirling.

By inserting the Eq.~\eqref{eq:SingleLambda} of Lemma.~\ref{lemma:SingleLambda}, one finally has
\begin{equation}\label{}
\begin{aligned}
\Gamma_2(\sigma,P)
=& 3^{w}\tr[(\sigma_{[w]}\bigotimes_{i=1}^w P_i)^{\otimes 2} \bigotimes_{i=1}^w\left(\frac1{2}\id_i^{\otimes 2}\otimes \mbb{F}_i^{(2)}+\frac1{2}\mbb{F}_i^{(2)}\otimes \id_i^{\otimes 2}+\mbb{F}_i^{(4)}\right)]\\
=& 3^{w}\tr[(\sigma_{[w]}\bigotimes_{i=1}^w P_i)^{\otimes 2} \bigotimes_{i=1}^w\left(\mbb{F}_i^{(4)}\right)]\\
=& 3^w \tr(\sigma_{[w]}\bigotimes_{i=1}^w P_i)^2=3^w \tr(\sigma P)^2.
    \end{aligned}
\end{equation}
Here the second line is by the following fact. Suppose there is  an appearance of $\id_i^{\otimes 2}\otimes \mbb{F}_i^{(2)}$ or $\mbb{F}_i^{(2)}\otimes \id_i^{\otimes 2}$ on $i$-th qubit, then the identity operator on the first or last two-copy would result in $\tr(P_i)=0$, such that gives no contribution to the final result. Consequently, we are left only $\mbb{F}_i^{(4)}$. In the final line, $\bigotimes_{i=1}^w\left(\mbb{F}_i^{(4)}\right)$ contains all possible $w$-qubit Pauli operators, and only the one $\bigotimes_{i=1}^wP_i$ contributes to the final result.
\end{proof}

\section{Statistical analysis for Haar random measurements}\label{app:Haar}
In this section, we aim to give the statistical analysis for the global Haar random measurement, that is, the unitary ensemble $\mc{E}$ is Haar random on $\mc{H}_D$ (or any unitary 4-design ensemble).  The central result shows as follows, as an analog of Proposition \ref{prop:Cl} in main text.
\begin{prop}\label{prop:Haar}
    Suppose $O_0$ is a traceless observable, the function $\Gamma_2$ defined in Eq.~\eqref{Gamma12} is upper bounded by
\begin{equation}\label{}
    \begin{aligned}
\Gamma_2^{\mathrm{Haar}}(\sigma,O_0)\leq c_1 \|O_0\|_2^2,
\end{aligned}
\end{equation}
for the random Haar measurements, where $\|A\|_2=\sqrt{\tr(AA^{\dag})}$ is the Frobenius norm, and $c_1$ is some constant independent of the dimension $D$.  In this case, the XSnorm $\|O_0\|_{\mathrm{Xshadow}}^{\mathrm{Haar}}\leq \sqrt{c_1} \|O_0\|_2$ by Eq.~\eqref{eq:Xshadow}.
\end{prop}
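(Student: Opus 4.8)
The plan is to evaluate the four-copy twirl in Eq.~\eqref{eq:Gamma2new} directly by Weingarten calculus (Schur--Weyl duality on four copies): for a unitary $4$-design, in particular Haar, the channel $\Phi^{(4,\mathrm{Haar})}_n(\cdot)=\mathbb{E}_U\,U^{\dag\otimes 4}(\cdot)U^{\otimes 4}$ is the orthogonal projection onto the span of the $24$ permutation operators $\{W_\pi:\pi\in S_4\}$ on $\mc{H}_D^{\otimes 4}$, so the whole object lives in a finite-dimensional space and can be handled explicitly. Using $\mc{M}^{-1}(O_0)=(D+1)O_0$ for a $4$-design, I would start from
\begin{equation}
\Gamma_2^{\mathrm{Haar}}(\sigma,O_0)=(D+1)^2\,\tr[(\sigma\otimes O_0\otimes\sigma\otimes O_0)\,\Phi^{(4,\mathrm{Haar})}_n(\Lambda_n)],
\end{equation}
and split $\Lambda_n=\Lambda_n^0+\Lambda_n^1$ exactly as in the Clifford section, with $\Lambda_n^0=\sum_{\mb{b}}\ket{\mb{b}}\bra{\mb{b}}^{\otimes 4}$ and $\Lambda_n^1=\sum_{\mb{b}\neq\mb{b}'}\ket{\mb{b}}\bra{\mb{b}}^{\otimes 2}\otimes\ket{\mb{b}'}\bra{\mb{b}'}^{\otimes 2}$; this is the same split reused in Appendix~\ref{app:Cl}.

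First I would compute the two twirls. The diagonal piece is a fourth moment of a single random pure state, $\Phi^{(4,\mathrm{Haar})}_n(\Lambda_n^0)=D\,\mathbb{E}_{\ket{\psi}}(\ket{\psi}\bra{\psi})^{\otimes 4}=\frac{D}{\binom{D+3}{4}}\cdot\frac1{24}\sum_{\pi\in S_4}W_\pi$, i.e.\ a combination of permutation operators with coefficients of order $D^{-3}$. For the off-diagonal piece, every term with fixed $\mb{b}\neq\mb{b}'$ contributes, after averaging, the same operator $T:=\mathbb{E}_{(a,c)}(\ket{a}\bra{a})^{\otimes 2}\otimes(\ket{c}\bra{c})^{\otimes 2}$ with $(a,c)$ a uniformly random orthonormal pair; extending $(a,c)$ to a Haar unitary and applying Weingarten gives $T=\sum_{\pi\in S_4}c_\pi W_\pi$ with $|c_\pi|=O(D^{-4})$, the $\Theta(D^{-4})$ coefficients occurring only for those $\pi$ stabilizing the partition $\{\{1,2\},\{3,4\}\}$. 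Summing over the $D(D-1)$ ordered pairs, $\Phi^{(4,\mathrm{Haar})}_n(\Lambda_n^1)=D(D-1)\,T$.

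Next I would assemble $\Gamma_2^{\mathrm{Haar}}$ and bound it term by term. The only operator-dependent quantities that enter are $\tr[W_\pi(\sigma\otimes O_0\otimes\sigma\otimes O_0)]=\prod_{\text{cycles of }\pi}\tr(\cdots)$: any $\pi$ that puts $O_0$ in a singleton cycle contributes $\tr(O_0)=0$, and every surviving term is a product of factors drawn from $\{\,\tr(O_0^2),\ \tr(\sigma O_0)^2,\ \tr(\sigma O_0\sigma O_0),\ \tr(\sigma^2 O_0^2),\ \tr(\sigma^2),\ 1\,\}$, each at most $\|O_0\|_2^2$ using $\|\sigma\|_\infty\le 1$, $\|\sigma\|_2\le 1$ and Cauchy--Schwarz. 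Counting powers of $D$ — the prefactor $(D+1)^2$, the enumerator factors $D$ or $D(D-1)$, and the moment coefficients $O(D^{-3})$ or $O(D^{-4})$ — shows each of the $\le 24$ terms is $O(\|O_0\|_2^2)$ uniformly in $D$; in fact the leading contribution is $\propto\tr(\sigma O_0)^2$ (the analogue of the extra $\tr(\sigma P)^2$ term in Proposition~\ref{prop:pauliG2P}) and the remainder is $O(D^{-1})\|O_0\|_2^2$. This gives $\Gamma_2^{\mathrm{Haar}}(\sigma,O_0)\le c_1\|O_0\|_2^2$ with $c_1$ independent of $D$; the finitely many small dimensions ($D=2$, i.e.\ $n=1$, being the only power of two below $4$), where the permutation operators are linearly dependent and the pseudoinverse of the Gram matrix is used, are disposed of by direct computation. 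Maximizing over $\sigma$ in Eq.~\eqref{eq:Xshadow} then gives the XSnorm bound.

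I expect the main obstacle to be the $D$-bookkeeping rather than any individual estimate: the prefactor $(D+1)^2$ grows, the moment operators decay like $D^{-4}$, and $\tr(W_\tau\Lambda_n)$ grows like $D^2$, and one must verify that these conspire to a $D$-independent constant. That cancellation is genuinely forced by $\tr(O_0)=0$ — a crude bound through $\|\sigma\otimes O_0\otimes\sigma\otimes O_0\|_1$ only yields the useless $O(D)\|O_0\|_2^2$ — so the argument really needs the structure of which permutations survive. The most computation-heavy single ingredient is obtaining the Weingarten coefficients $c_\pi$ of $T$, equivalently the mixed fourth moment of two random orthonormal vectors, although only their leading scaling in $D$ is needed for the stated bound.
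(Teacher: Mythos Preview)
Your proposal is correct and essentially mirrors the paper's proof in Appendix~\ref{app:Haar}: the same split $\Lambda_n=\Lambda_n^0+\Lambda_n^1$, the same identification of the stabilizer subgroup $S_r=\{(),(12),(34),(12)(34)\}$ as the only permutations surviving in the $\Lambda_n^1$ twirl, and the same termwise bound $\tr[(\sigma\otimes O_0\otimes\sigma\otimes O_0)\,T_\pi]\le\tr(O_0^2)$ (the paper's Lemma~\ref{prop:piO2}), with the same $D$-power bookkeeping yielding $\mathcal{O}(D^{-1})\|O_0\|_2^2$ from $\Lambda_n^0$ and $\mathcal{O}(1)\|O_0\|_2^2$ from $\Lambda_n^1$. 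The only cosmetic difference is that the paper packages Weingarten via the irrep formula $\Phi^{(4,\mathrm{Haar})}(A)=\tfrac{1}{4!}\sum_{\pi}\tr(AT_\pi)T_{\pi^{-1}}\sum_\lambda\tfrac{d_\lambda}{D_\lambda}P_\lambda$ and bounds characters, whereas you expand directly in Weingarten coefficients $c_\pi$.
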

The main task of this section is to prove this proposition, which is helpful for the discussion of random Clifford measurements shown in the next section. 

Recall the essential quantity in Eq.~\eqref{eq:gammaClifford} we would like to calculate shows 
\begin{equation}\label{eq:HaarAll}
\begin{aligned}
\Gamma_2^{\mathrm{Haar}}&=(D+1)^2\tr[\sigma\otimes O_0 \otimes \sigma\otimes O_0\  \Phi^{(4,\mathrm{Haar})}_n\left( \Lambda_n \right)]\\
&=(D+1)^2\tr[\sigma\otimes O_0 \otimes \sigma\otimes O_0\  \Phi^{(4,\mathrm{Haar})}_n\left( \Lambda_n^0+\Lambda_n^1 \right)]
    \end{aligned}
\end{equation}
with the only difference being that we use the Haar twirling $\Phi^{(4,\mathrm{Haar})}$ here, and $\Lambda_n$ is decomposed into two parts as in main text, 
\begin{equation*}\label{}
\begin{aligned}
\Lambda_n^0&=\sum_{\mb{b}} \ket{\mb{b}}\bra{\mb{b}}^{\otimes 4},\\
\Lambda_n^1&=\sum_{\mb{b}\neq \mb{b}'} \ket{\mb{b}}\bra{\mb{b}}^{\otimes 2}\otimes \ket{\mb{b}'}\bra{\mb{b}'}^{\otimes 2}.
    \end{aligned}
\end{equation*}

Before we calculate their contributions separately in the next subsections, we give a brief review of the result of the $t$-copy Haar twirling, and more details can be found in, for example Ref.~\cite{Collins2006Integration,Roberts2017Chaos,Roth2018Recovering}. Denote the permutation elements of the $t$-th order symmetric group as $\pi\in S_t$, and there is a unitary representation of  $\pi$ on the $t$-copy Hilbert space $\mc{H}_D^{\otimes t}$ as
\begin{equation}\label{}
    \begin{aligned}
T_{\pi}\ket{\mb{b}_1,\mb{b}_2,\cdots,\mb{b}_t}=\ket{\mb{b}_{\pi(1)},\mb{b}_{\pi(2)},\cdots,\mb{b}_{\pi(t)}},
    \end{aligned}
\end{equation}
with $\ket{\mb{b}}$ the basis state for one copy. By the celebrated Schur–Weyl duality, the twirling result is related to the irreducible representation (ir-rep) of $S_t$ as follows.
\begin{lemma}\label{}
The $t$-fold Haar twirling channel maps $A\in \mc{H}_D^{\otimes t}$ into
\begin{equation}\label{eq:HaarTt}
    \begin{aligned}
&\Phi^{(t,\mathrm{Haar})}(A)= \frac1{t!}\sum_{\pi\in S_t}\tr(A T_\pi)T_{\pi^{-1}}\sum_{\lambda} \frac{d_\lambda}{D_\lambda}P_\lambda,\\
    \end{aligned}
\end{equation}
where $\lambda$ denotes the irreducible representation of $S_t$, and $P_{\lambda}$ is the corresponding projector showing
\begin{equation}\label{eq:PLambda}
    \begin{aligned}
P_\lambda=\frac{d_{\lambda}}{t!}\sum_{\pi\in S_t}\chi^{\lambda}(\pi)T_\pi
    \end{aligned},
\end{equation}
with $\chi^{\lambda}(\pi)$ being the character of $\pi$.

\comments{
$Q$ is a stabilizer code subspace projector, commuting with $P_{\lambda}$,
\begin{equation}\label{}
    \begin{aligned}
Q=\frac1{D^2}\sum_k W_k^{\otimes 4}
    \end{aligned}
\end{equation}
with $W_k$ being all $D^2$ $n$-qubit Pauli operators.
}

\end{lemma}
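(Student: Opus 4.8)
The plan is to identify $\Phi^{(t,\mathrm{Haar})}$ with the Hilbert--Schmidt orthogonal projection onto the commutant of the diagonal $U(D)$ action, and then make that projection explicit via Schur--Weyl duality. First I would note that $\Phi^{(t,\mathrm{Haar})}(A)=\int dU\,U^{\dag\otimes t}AU^{\otimes t}$ is, by invariance of the Haar measure, a linear idempotent; that it is self-adjoint with respect to $\langle X,Y\rangle=\tr(X^\dag Y)$, since $\langle \Phi^{(t,\mathrm{Haar})}(X),Y\rangle=\int dU\,\tr\!\big(U^{\otimes t}X^\dag U^{\dag\otimes t}Y\big)=\langle X,\Phi^{(t,\mathrm{Haar})}(Y)\rangle$; and that its range is exactly the commutant $\mc{A}=\{X:[X,V^{\otimes t}]=0\ \forall V\in U(D)\}$. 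Hence $\Phi^{(t,\mathrm{Haar})}$ is the orthogonal projection onto $\mc{A}$.

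Next I would invoke Schur--Weyl duality: $\mc{A}=\mathrm{span}\{T_\pi:\pi\in S_t\}$, the $T_\pi$ being linearly independent for $D\ge t$, with Gram matrix $G_{\pi,\sigma}:=\tr(T_\pi^\dag T_\sigma)=\tr(T_{\pi^{-1}\sigma})=D^{c(\pi^{-1}\sigma)}$, where $c(\cdot)$ counts cycles; note $G$ is real symmetric and depends only on $\pi^{-1}\sigma$. Writing $\Phi^{(t,\mathrm{Haar})}(A)=\sum_\sigma c_\sigma T_\sigma$ and imposing $\langle T_\pi,A-\sum_\sigma c_\sigma T_\sigma\rangle=0$ for all $\pi$ gives $\Phi^{(t,\mathrm{Haar})}(A)=\sum_{\pi,\sigma}(G^{-1})_{\pi,\sigma}\,\tr(T_{\pi^{-1}}A)\,T_\sigma$, i.e.\ $(G^{-1})_{\pi,\sigma}$ is the unitary Weingarten function $\mathrm{Wg}(\pi^{-1}\sigma,D)$.

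The remaining (and central) step is to diagonalize $G$. Using the isotypic decomposition $\mc{H}_D^{\otimes t}\cong\bigoplus_\lambda \mc{H}_\lambda\otimes\mathbb{C}^{d_\lambda}$ (sum over partitions $\lambda\vdash t$ with at most $D$ rows), under which $T_\pi$ acts as $\id_{D_\lambda}\otimes\rho_\lambda(\pi)$ with $\rho_\lambda$ the irreducible $S_t$-representation labelled by $\lambda$, one obtains that $P_\lambda=\frac{d_\lambda}{t!}\sum_\pi\chi^\lambda(\pi)T_\pi$ is the projector onto block $\lambda$ — this is the central idempotent of $\mathbb{C}[S_t]$ pushed through $T$, giving Eq.~\eqref{eq:PLambda} — and that $G$ restricted to block $\lambda$ is $D_\lambda$ times the identity. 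This yields the character form $\mathrm{Wg}(\tau,D)=\frac{1}{(t!)^2}\sum_\lambda\frac{d_\lambda^2}{D_\lambda}\chi^\lambda(\tau)$. Substituting this into the projection formula of the previous paragraph and reorganizing — using that $\chi^\lambda$ is a class function and the identity $T_{\pi^{-1}}P_\lambda=\frac{d_\lambda}{t!}\sum_\rho\chi^\lambda(\pi\rho)T_\rho$ — one checks that $\big(\tfrac{1}{t!}\sum_\pi\tr(AT_\pi)T_{\pi^{-1}}\big)\big(\sum_\lambda\tfrac{d_\lambda}{D_\lambda}P_\lambda\big)$ expands term by term into $\sum_{\pi,\sigma}\mathrm{Wg}(\pi^{-1}\sigma)\,\tr(T_{\pi^{-1}}A)\,T_\sigma$, which is Eq.~\eqref{eq:HaarTt}.

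I expect the main obstacle to be precisely this diagonalization of the Gram matrix, i.e.\ establishing the character formula for $\mathrm{Wg}$; this is where the representation-theoretic content (Schur--Weyl duality and the block structure $\mc{A}\cong\bigoplus_\lambda\id_{D_\lambda}\otimes\mc{B}(\mathbb{C}^{d_\lambda})$) really enters, while everything else is bookkeeping with permutation operators. In a self-contained write-up one can either carry out this diagonalization or cite Ref.~\cite{Collins2006Integration}; as a sanity check, $t=1$ recovers $\Phi^{(1,\mathrm{Haar})}(A)=\tr(A)\id_D/D$, and $t=2$ recovers $\Phi^{(2,\mathrm{Haar})}(A)=c_+P_++c_-P_-$ with the familiar coefficients, confirming the normalization of the stated formula.
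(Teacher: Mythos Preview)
Your proposal is correct and follows the standard derivation of the Weingarten formula via Schur--Weyl duality, but there is no proof in the paper to compare it against: the paper does not prove this lemma at all. It is introduced with ``we give a brief review of the result of the $t$-copy Haar twirling, and more details can be found in, for example Ref.~\cite{Collins2006Integration,Roberts2017Chaos,Roth2018Recovering},'' and then simply stated as known background. So you have supplied substantially more than the paper does here.

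Your outline --- (i) identify $\Phi^{(t,\mathrm{Haar})}$ as the Hilbert--Schmidt orthogonal projection onto the commutant, (ii) use Schur--Weyl to span the commutant by $\{T_\pi\}$, (iii) invert the Gram matrix $G_{\pi,\sigma}=D^{c(\pi^{-1}\sigma)}$ via the isotypic decomposition to obtain the character form $\mathrm{Wg}(\tau,D)=\tfrac{1}{(t!)^2}\sum_\lambda d_\lambda^2 D_\lambda^{-1}\chi^\lambda(\tau)$, and (iv) repackage as Eq.~\eqref{eq:HaarTt} --- is exactly the route taken in the references the paper cites, especially \cite{Collins2006Integration}. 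One small caution for a write-up: be careful with the multiplication convention for the permutation operators (whether $T_\pi T_\rho=T_{\pi\rho}$ or $T_{\rho\pi}$ under the paper's definition $T_\pi\ket{\mb{b}_1,\dots,\mb{b}_t}=\ket{\mb{b}_{\pi(1)},\dots,\mb{b}_{\pi(t)}}$), since this affects the bookkeeping in step (iv); the final formula is insensitive to this because $\mathrm{Wg}$ is a class function, but intermediate identities like ``$T_{\pi^{-1}}P_\lambda=\tfrac{d_\lambda}{t!}\sum_\rho\chi^\lambda(\pi\rho)T_\rho$'' depend on it.
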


\subsection{The contribution of $\Lambda^0_n$ in Haar case}
Inserting the term $\Lambda^0_n$ in the 4-copy Haar twirling channel in Eq.~\eqref{eq:HaarTt} with $t=4$ one has

\begin{equation}\label{}
    \begin{aligned}
\Phi^{(4,\mathrm{Haar})}(\Lambda^0_n)&= \frac1{4!}\sum_{\pi\in S_4}\sum_{\mb{b}}\tr(\ket{\mb{b}}\bra{\mb{b}}^{\otimes 4} T_{\pi^{-1}})T_\pi\sum_{\lambda} \frac{d_\lambda}{D_\lambda}P_\lambda\\
&=\frac{D}{4!}\sum_{\pi\in S_4}T_\pi\sum_{\lambda} \frac{d_\lambda}{D_\lambda}P_\lambda\\
&=DP_{\mathrm{sym}}\sum_{\lambda} \frac{d_\lambda}{D_\lambda}P_\lambda\\&
=\frac{D}{D_{\sym}}P_{\sym}=\frac{D}{4!D_{\sym}}\sum_{\pi\in S_k}\pi
\end{aligned}
\end{equation}
Here the second line is by the fact $\tr(\ket{\mb{b}}\bra{\mb{b}}^{\otimes 4} T_{\pi^{-1}})=1$ for any $\mb{b}$ and $\pi$; the third line is by the definition of the symmetric subspace as $P_{\sym}=\frac{1}{4!}\sum_{\pi\in S_4}T_\pi$; the final line is by the fact that symmetric subspace is one of the ir-rep subspace and orthogonal to others.
Inserting this twirling result and the dimension $D_{\sym}=(D+3)(D+2)(D+1)D/4!$, one has
\begin{equation}\label{eq:HaarL0}
\begin{aligned}
&(D+1)^2\tr[\sigma\otimes O_0 \otimes \sigma\otimes O_0\  \Phi^{(4,\mathrm{Haar})}\left(\Lambda^0_n\right)]\\
=&\frac{(D+1)^2D}{(D+3)(D+2)(D+1)D} \sum_{\pi\in S_k}\tr(\sigma\otimes O_0 \otimes \sigma\otimes O_0\ T_\pi)\\
=&\mathcal{O}(D^{-1})\sum_{\pi\in S_4}\tr(\sigma\otimes O_0 \otimes \sigma\otimes O_0\ T_\pi)= \mathcal{O}( D^{-1}) \tr(O_0^2)
    \end{aligned}
\end{equation}
The final line shows that the result is in the order $\mathcal{O}(D^{-1})\tr(O_0^2)$, by the following lemma.
\begin{lemma}\label{prop:piO2}
For any $\pi\in S_4$, the following inequality holds for a quantum state $\sigma$ and a traceless observable $O_0$,
\begin{equation}\label{}
    \begin{aligned}
\tr(\sigma\otimes O_0 \otimes \sigma\otimes O_0\ T_\pi)\leq \tr(O_0^2).
    \end{aligned}
\end{equation}
\end{lemma}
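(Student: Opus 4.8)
The plan is to reduce the statement to a short finite case check by using the standard factorization of the trace of a tensor product against a permutation operator. Concretely, for any $\pi\in S_4$ with cycle decomposition into cycles $(i_1\,i_2\,\cdots\,i_\ell)$ one has
\[
\tr\!\left[(A_1\otimes A_2\otimes A_3\otimes A_4)\,T_\pi\right]=\prod_{\text{cycles of }\pi}\tr\!\left(A_{i_1}A_{i_2}\cdots A_{i_\ell}\right),
\]
with the operators inside each cycle multiplied in cyclic order. I would apply this with $A_1=A_3=\sigma$ (a density operator) and $A_2=A_4=O_0$ (a traceless Hermitian observable), and then run over the five conjugacy classes of $S_4$: the identity, the six transpositions, the eight $3$-cycles, the three products of two disjoint transpositions, and the six $4$-cycles.

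The organising observation is that $O_0$ is traceless, so any $\pi$ with a fixed point at slot $2$ or slot $4$ contributes a factor $\tr(O_0)=0$; this annihilates the identity, every transposition except $(2\,4)$, and six of the eight $3$-cycles. Tracking the surviving terms, each reduces to exactly one of: $\tr(O_0^2)$ (from $(2\,4)$); $\tr(\sigma O_0^2)$ (from the $3$-cycles $(1\,2\,4),(1\,4\,2),(2\,3\,4),(2\,4\,3)$); $\tr(\sigma O_0)^2$ (from $(1\,2)(3\,4)$ and $(1\,4)(2\,3)$); $\tr(\sigma^2)\tr(O_0^2)$ (from $(1\,3)(2\,4)$); and either $\tr(\sigma^2O_0^2)$ or $\tr(\sigma O_0\sigma O_0)$ (from the six $4$-cycles, the latter arising from $(1\,2\,3\,4)$ and $(1\,4\,3\,2)$). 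So it suffices to bound each of these five scalars by $\tr(O_0^2)=\|O_0\|_2^2$.

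These bounds are all elementary: $\tr(\sigma O_0^2)\le\|O_0^2\|_\infty=\|O_0\|_\infty^2\le\|O_0\|_2^2$; $\tr(\sigma O_0)^2\le\|O_0\|_\infty^2\le\|O_0\|_2^2$ since $|\tr(\sigma O_0)|\le\|O_0\|_\infty$; $\tr(\sigma^2)\tr(O_0^2)\le\tr(O_0^2)$ by purity $\tr(\sigma^2)\le1$; $\tr(\sigma^2O_0^2)\le\tr(O_0^2)$ since $0\preceq\sigma^2\preceq\id$ and $O_0^2\succeq0$; and finally $\tr(\sigma O_0\sigma O_0)\le\tr(\sigma^2O_0^2)\le\tr(O_0^2)$, where the first inequality is the commutator identity $\tr\!\big((\sigma O_0-O_0\sigma)^2\big)\le0$ for Hermitian $\sigma,O_0$ (equivalently $\|\sqrt\sigma\,O_0\sqrt\sigma\|_2\le\|\sqrt\sigma\|_\infty^2\|O_0\|_2\le\|O_0\|_2$). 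Combining with the factorization yields $\tr[(\sigma\otimes O_0\otimes\sigma\otimes O_0)T_\pi]\le\tr(O_0^2)$ for every $\pi\in S_4$.

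The only real care needed — so the \emph{hard part}, though it is not genuinely hard — is keeping the cycle-ordering conventions consistent so that the correct operator words appear inside each cycle, and noticing that the two $4$-cycles $(1\,2\,3\,4)$ and $(1\,4\,3\,2)$ produce the mixed term $\tr(\sigma O_0\sigma O_0)$, which is not obviously $\le\tr(O_0^2)$ and requires the commutator trick above; everything else is a mechanical check over the $24$ elements of $S_4$.
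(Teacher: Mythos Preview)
Your proposal is correct and follows essentially the same approach as the paper: factor $\tr[(\sigma\otimes O_0\otimes\sigma\otimes O_0)T_\pi]$ along the cycles of $\pi$, enumerate the handful of distinct scalar values that can occur, and bound each by $\tr(O_0^2)$. Your enumeration is in fact more careful than the paper's own proof, which omits the case $\tr(\sigma O_0)^2$ arising from $(1\,2)(3\,4)$ and $(1\,4)(2\,3)$; for the mixed term $\tr(\sigma O_0\sigma O_0)$ the paper uses the operator Cauchy--Schwarz inequality $\tr(A^2)\le\|A\|_2^2$ with $A=O_0\sigma$, which is equivalent to your commutator identity.
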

\begin{proof}
For the permuation operator $T_\pi$, the formula could take the following values: $\tr(\sigma^2)\tr(O_0^2)$, $\tr(O_0^2)$, $\tr(O_0^2\sigma)$, $\tr(O_0^2\sigma^2)$, $\tr(O_0\sigma O_0\sigma)$. All these can be bounded by $\tr(O_0^2)$. We bound the last one with Cauchy–Schwarz inequality for operator as follows.
\begin{equation}\label{}
\begin{aligned}
\tr(O_0\sigma O_0\sigma)\leq \|O_0\sigma \|_2^2=\sqrt{\tr(O_0\sigma \sigma O_0)}^2=\tr(O_0^2\sigma^2)\leq \|O_0^2\|_{\infty}\leq \tr(O_0^2).
    \end{aligned}
\end{equation}
\end{proof}

\subsection{The contribution of $\Lambda_n^1$ in Haar case}
For the second term of $\Lambda_n^1$, by inserting the Haar twirling channel in Eq.~\eqref{eq:HaarTt} with $t=4$, one has 
\begin{equation}\label{}
    \begin{aligned}
\Phi^{(4,\mathrm{Haar})}(\Lambda_n^1)&= \frac1{4!}\sum_{\pi\in S_4}\sum_{\mb{b}\neq \mb{b}'}\tr(\ket{\mb{b}}\bra{\mb{b}}^{\otimes 2}\otimes \ket{\mb{b}'}\bra{\mb{b}'}^{\otimes 2}T_{\pi^{-1}})T_\pi\sum_{\lambda} \frac{d_\lambda}{D_\lambda}P_\lambda\\
&=\frac1{4!}(D^2-D) \sum_{\pi\in S_r}\pi\sum_{\lambda} \frac{d_\lambda}{D_\lambda}P_\lambda\\
&=\frac1{6}(D^2-D) \sum_{\lambda} \frac{d_\lambda}{D_\lambda}P_rP_\lambda.
\end{aligned}
\end{equation}
Here the second line is by the fact that the trace formula gives zero unless $\pi\in S_r$, with the set $S_r=\{(),(12),(34),(12)(34)\}$ being a subgroup of $S_4$. And we define the corresponding projector as $P_r=\frac1{4}\sum_{\pi\in S_r}T_\pi$. One thus further gets
\begin{equation}\label{eq:lambda1}
\begin{aligned}
(D+1)^2\tr[\sigma\otimes O_0 \otimes \sigma\otimes O_0\ \Phi^{(4,\mathrm{Haar})}\left(\Lambda_n^1\right)]=&\frac1{6}(D^2-D)(D+1)^2 \sum_{\lambda} \frac{d_\lambda}{D_\lambda}\tr(\sigma\otimes O \otimes \sigma\otimes O\ P_rP_\lambda)
\end{aligned}
\end{equation}

Recall the definition of  $P_{\lambda}$ in Eq.~\eqref{eq:PLambda}, and for each ir-rep $\lambda$, one has
\begin{equation}\label{eq:Haaruselatter}
\begin{aligned}
\frac{d_\lambda}{D_\lambda}\tr(\sigma\otimes O_0 \otimes \sigma\otimes O_0\ P_rP_\lambda)&=\frac{d_\lambda^2}{D_\lambda}\frac1{4*4!}\sum_{\pi'\in S_r,\pi\in S_4}\chi^{\lambda}(\pi)\tr(\sigma\otimes O_0 \otimes \sigma\otimes O_0\ T_{\pi'}T_\pi )\\
&=\frac{d_\lambda^2}{D_\lambda}\frac1{4*4!}\sum_{\pi'\in S_r,\pi\in S_4}\chi^{\lambda}(\pi)\tr(\sigma\otimes O_0 \otimes \sigma\otimes O_0\ T_{\pi'\pi} )\\
&\leq \frac{d_\lambda^2}{D_\lambda}|\chi^{\lambda}(\pi)|_{\max \pi\in S_4}\tr(O_0^2)
    \end{aligned}
\end{equation}
where the last line we use Lemma \ref{prop:piO2} for each $T_{\pi'\pi}$. Here $|\chi^{\lambda}(\pi)|_{\max \pi \in S_4}\leq 3$ , $\frac{d_\lambda^2}{D_\lambda}=\mathcal{O}(D^{-4})$ for any  ir-rep $\lambda$, and there are totally $5$ ir-reps \cite{zhu2016clifford}. As a result one has

\begin{equation}\label{eq:HaarL1}
\begin{aligned}
(D+1)^2\tr[\sigma\otimes O_0 \otimes \sigma\otimes O_0\ \Phi^{(4,\mathrm{Haar})}\left(\Lambda^1_n\right)]&<\frac1{6}(D^2-D)(D+1)^2 \left[\sum_{\lambda} \frac{d_\lambda^2}{D_\lambda}|\chi^{\lambda}(\pi)|_{\max \pi\in S_4}\right]\ \tr(O_0^2)\\
&=\mathcal{O}(1)\tr(O_0^2)
    \end{aligned}
\end{equation}
Inserting Eq.~\eqref{eq:HaarL0} and Eq.~\eqref{eq:HaarL1} into Eq.~\eqref{eq:HaarAll}, we finish the proof of Proposition \ref{prop:Haar}.

\section{Statistical analysis for  random Clifford measurements}\label{app:Cl}
Similar as the Haar case, here the we should calculate the quantity
in Eq.~\eqref{eq:gammaClifford} shown as follows,
\begin{equation}\label{eq:Cliffordall}
\begin{aligned}
\Gamma_2^{\mathrm{Cl}}(\sigma,O_0)=(D+1)^2\tr[\sigma\otimes O_0 \otimes \sigma\otimes O_0\  \Phi_n^{(4,\mathrm{Cl})}\left( \Lambda_n^0+\Lambda_n^1 \right)].
    \end{aligned}
\end{equation}
In the following subsections, as in the Haar case, we evaluate the contributions from $\Lambda_n^0$ and $\Lambda_n^1$ separately. As Clifford circuit is not a 4-design, the 4-copy twirling result is a little different compared to Eq.~\eqref{eq:HaarTt}, which is shown as follows. One can refer to Ref.~\cite{zhu2016clifford,Roth2018Recovering,Leone2021quantumchaosis} for more details.
\begin{lemma}\label{}
The $4$-fold $n$-qubit Clifford twirling channel maps $A\in \mc{H}_D^{\otimes 4}$ into
\begin{equation}\label{eq:CliffordTt}
    \begin{aligned}
\Phi_n^{(4,\mathrm{Cl})}(A)= \frac1{4!}\sum_{\lambda} d_\lambda\sum_{\pi\in S_4}[\frac1{D_\lambda^+}\tr(AQT_\pi)T_{\pi^{-1}}Q+\frac1{D_\lambda^-}\tr(AQ^{\perp}T_\pi)T_{\pi^{-1}} Q^{\perp}]P_\lambda
    \end{aligned}
\end{equation}
where $\lambda$ denotes the irreducible representation of $S_t$, and $P_{\lambda}$ is the corresponding projector shown
in Eq.~\eqref{eq:PLambda} with $t=4$. The operator
$Q$ is a stabilizer code subspace projector, commuting with any $T_\pi$ and $P_{\lambda}$,
\begin{equation}\label{eq:Q}
    \begin{aligned}
Q=\frac1{D^2}\sum_k W_k^{\otimes 4}
    \end{aligned}
\end{equation}
with $W_k$ running on all $D^2$ $n$-qubit Pauli operators including the identity; and $Q^{\perp}=\id_D^{\otimes 4}-Q$.
\end{lemma}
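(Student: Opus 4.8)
The plan is to recognize $\Phi_n^{(4,\mathrm{Cl})}$ as the Hilbert--Schmidt orthogonal projection onto the fourth-order Clifford commutant and then to write that projection down block by block. First I would use the general fact that for any finite group the twirl $\Phi(A)=\mathbb{E}_U\,U^{\dagger\otimes 4}A\,U^{\otimes 4}$ is idempotent and self-adjoint with respect to $\langle A,B\rangle=\tr(A^{\dagger}B)$, hence is the orthogonal projector onto the commutant $\mathcal{A}:=\{X:\,U^{\otimes 4}X=XU^{\otimes 4}\ \forall\,U\in\mathrm{Cl}\}$. It therefore suffices to check that the right-hand side of \eqref{eq:CliffordTt}, call it $\Psi(A)$, equals this projector.

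Next I would import the structure of $\mathcal{A}$ from the fourth-moment analysis of the Clifford group \cite{zhu2016clifford} --- equivalently, from the Clifford Schur--Weyl duality of \cite{gross2021schur}: $\mathcal{A}$ is generated by the permutation operators $\{T_\pi\}_{\pi\in S_4}$ together with the single stabilizer projector $Q$ of \eqref{eq:Q}. Since $Q$ commutes with every $T_\pi$ and hence with every $P_\lambda$ of \eqref{eq:PLambda}, the operators $\{QP_\lambda,\ Q^{\perp}P_\lambda\}_\lambda$ are mutually orthogonal idempotents summing to the identity and central in $\mathcal{A}$, and the content of the ``graceful failure'' theorem is that they are the minimal central idempotents: on the block $QP_\lambda\,\mc{H}_D^{\otimes 4}$ the symmetric group acts isotypically of type $\lambda$ with multiplicity $D_\lambda^{+}:=\tr(QP_\lambda)/d_\lambda$, and $\mathcal{A}$ restricted there is the \emph{full} commutant of this isotypic action, i.e.\ the span of $\{T_\pi QP_\lambda\}_\pi$; likewise on the $Q^{\perp}$ block with $D_\lambda^{-}:=\tr(Q^{\perp}P_\lambda)/d_\lambda$ (so $D_\lambda^{+}+D_\lambda^{-}=D_\lambda$, and blocks of vanishing multiplicity are simply dropped from the sum).

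The derivation then reduces to the same character-orthogonality computation that already produced the Haar formula \eqref{eq:HaarTt}, applied separately in the $Q$- and $Q^{\perp}$-blocks: within a block on which $S_4$ acts as $\id_m\otimes\rho_\lambda$, the orthogonal projection onto $\operatorname{span}\{\id_m\otimes\rho_\lambda(\pi)\}_\pi$ is computed exactly as in the Haar case but with the Haar multiplicity $D_\lambda$ replaced by the block multiplicity $m$; taking $m=D_\lambda^{\pm}$, using $(QP_\lambda)^2=QP_\lambda$ together with $[QP_\lambda,T_\pi]=0$, and summing over $\lambda$ and the two signs yields \eqref{eq:CliffordTt}. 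The only bookkeeping wrinkle is that $\tr(AQT_\pi)$ there may be replaced by $\tr(AQP_\lambda T_\pi)$ without change, since inserting $P_\mu=\tfrac{d_\mu}{4!}\sum_\tau\chi^\mu(\tau)T_\tau$ and relabeling gives $\sum_\pi\tr(AQP_\mu T_\pi)\,T_{\pi^{-1}}QP_\lambda=\delta_{\mu\lambda}\sum_\pi\tr(AQP_\lambda T_\pi)\,T_{\pi^{-1}}QP_\lambda$. Alternatively one can bypass this and simply verify that $\Psi$ maps into $\mathcal{A}$, is Hilbert--Schmidt self-adjoint, and restricts to the identity on each generator $T_\sigma Q^{(\pm)}P_\mu$ of $\mathcal{A}$ --- the last check being the same orthogonality identity using $\tr(QP_\lambda)=d_\lambda D_\lambda^{+}$ --- which together force $\Psi=\Phi_n^{(4,\mathrm{Cl})}$.

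The main obstacle is none of this algebra but the structural input of the second step: that the fourth Clifford commutant is exactly $\operatorname{span}\{T_\pi,\,T_\pi Q\}$ and that it exhausts every isotypic multiplicity algebra, so that the $Q/Q^{\perp}$ refinement of the $S_4$-isotypic decomposition is already the decomposition into simple blocks. This is the genuinely representation-theoretic fact, proved in \cite{zhu2016clifford,gross2021schur} and reviewed in \cite{Roth2018Recovering,Leone2021quantumchaosis}; for present purposes it is legitimate to quote it, with the caveat that one should stay in the qubit-number regime for which this two-fold refinement is exact.
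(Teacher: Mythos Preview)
Your proposal is correct, and in fact it supplies more than the paper does: the paper does not prove this lemma at all. It is stated as a known structural fact with the sentence ``One can refer to Ref.~\cite{zhu2016clifford,Roth2018Recovering,Leone2021quantumchaosis} for more details,'' and then simply used. So there is no paper proof to compare against; your sketch is a genuine derivation of a result the authors are content to quote.

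Your argument is sound. The identification of $\Phi_n^{(4,\mathrm{Cl})}$ with the Hilbert--Schmidt projector onto the commutant is standard, and the decisive input---that the fourth-order Clifford commutant is exactly $\operatorname{span}\{T_\pi,T_\pi Q\}_{\pi\in S_4}$, so that the central idempotents are precisely the $QP_\lambda$ and $Q^{\perp}P_\lambda$---is correctly isolated and correctly attributed to \cite{zhu2016clifford,gross2021schur}. Once that is granted, your block-by-block reduction to the Haar-type formula \eqref{eq:HaarTt} with $D_\lambda$ replaced by the block multiplicities $D_\lambda^{\pm}=\tr(Q^{(\pm)}P_\lambda)/d_\lambda$ is exactly right, and your ``bookkeeping wrinkle'' (that the $P_\lambda$ on the right already kills the cross-$\mu$ contributions in the trace) is the clean way to reconcile the stated formula with the block picture. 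The caveat you add about blocks with $D_\lambda^{+}=0$ being dropped is also necessary and matches the paper's later remark that $QP_\lambda=0$ forces $\Theta_1(\lambda)=\Theta_2(\lambda)=0$.
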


\subsection{The contribution of $\Lambda^0_n$ in Clifford case}
For the first term $\Lambda^0_n$, by using the twirling formula in Eq.~\eqref{eq:CliffordTt} one has
\begin{equation}\label{eq:L0Cliff}
    \begin{aligned}
\Phi_n^{(4,\mathrm{Cl})}(\Lambda^0_n)&= \frac1{4!}\sum_{\lambda} d_\lambda\sum_{\pi\in S_4}\sum_{\mb{b}}[\frac1{D_\lambda^+}\tr(\ket{\mb{b}}\bra{\mb{b}}^{\otimes 4} QT_{\pi^{-1}})T_\pi Q+\frac1{D_\lambda^-}\tr(\ket{\mb{b}}\bra{\mb{b}}^{\otimes 4}Q^{\perp}T_{\pi^{-1}})T_\pi Q^{\perp}]P_\lambda\\
&=\frac1{4!}\sum_{\lambda} d_\lambda\sum_{\pi\in S_4}\sum_{\mb{b}}[\frac1{D_\lambda^+}\tr(\ket{\mb{b}}\bra{\mb{b}}^{\otimes 4} Q)T_\pi Q+\frac1{D_\lambda^-}\tr(\ket{\mb{b}}\bra{\mb{b}}^{\otimes 4}Q^{\perp})T_\pi Q^{\perp}]P_\lambda\\
&=\frac1{4!}\sum_{\lambda} d_\lambda\sum_{\pi\in S_4}D[\frac1{D_\lambda^+D}Q +\frac1{D_\lambda^-}(1-1/D)Q^{\perp}]T_\pi P_\lambda\\
&=\sum_{\lambda} d_\lambda [\frac1{D_\lambda^+}Q +\frac{D-1}{D_\lambda^-}Q^{\perp}]\sum_{\pi\in S_4}\frac1{4!} T_\pi\ P_\lambda\\
&=[\frac1{D_{\sym}^+}Q +\frac{D-1}{D_{\sym}^-}Q^{\perp}]P_{\sym}.
\end{aligned}
\end{equation}
Here in the second line, we use the fact that $T_\pi\ket{\mb{b}}^{\otimes4}=\ket{\mb{b}}^{\otimes4}$ for any $\pi$. In the third line, $\tr(\ket{\mb{b}}\bra{\mb{b}}^{\otimes 4} Q)=1/D$, since $W_k$ which only contains $\{\id_2,Z\}$ on each qubit contributes to it and there are totally $2^n=D$ such terms. Therefore,  $\tr(\ket{\mb{b}}\bra{\mb{b}}^{\otimes 4} Q^{\perp})=1-1/D$. In the final line, we only have $P_{\sym}$ due to the orthogonality of each $P_{\lambda}$.
Inserting this twirling result into the trace formula in Eq.~\eqref{eq:Cliffordall} one gets

\begin{equation}\label{eq:ClL0}
\begin{aligned}
&(D+1)^2\tr[\sigma\otimes O_0 \otimes \sigma\otimes O_0\  \Phi_n^{(4,\mathrm{Cl})}\left(\Lambda^0_n\right)]\\
=&(D+1)^2(\frac1{D_{\sym}^+}-\frac{D-1}{D_{\sym}^-})\tr[\sigma\otimes O_0 \otimes \sigma\otimes O_0\ QP_{\sym}]+(D+1)^2\frac{D-1}{D_{\sym}^-}\tr[\sigma\otimes O_0 \otimes \sigma\otimes O_0\ P_{\sym}]\\
=&\mathcal{O}(1)\tr[\sigma\otimes O_0 \otimes \sigma\otimes O_0\ QP_{\sym}]+ \mathcal{O}(D^{-1}) \tr[\sigma\otimes O_0 \otimes \sigma\otimes O_0\ P_{\sym}]\\
=&\mathcal{O}(D^{-1})\tr(O_0^2)
    \end{aligned}
\end{equation}
where in the third line we use the fact 
$D_{\sym}^+= \mathcal{O}(D^{-2})$ and $D_{\sym}^-=\mathcal{O}(D^{-4})$ \cite{zhu2016clifford}; in the last line, the second term is directly from the result in Eq.~\eqref{eq:HaarL0} of the Haar case, and the first term is on account of the following Lemma \ref{prop:piO2Qnew}, by taking $P_0=P_{\sym}$ and $P_1=\id$ there. In total, 
we thus show that the contribution from $\Lambda^0_n$ is still in the order $\mathcal{O}(D^{-1})\tr(O^2)$ as in the Haar case.
\begin{lemma}\label{prop:piO2Qnew}
For any two projectors $P_0$ and $P_1$, which may not commute with each other but commute with $Q$,  the following inequality holds for a quantum state $\sigma$ and an observable $O$,
\begin{equation}\label{}
    \begin{aligned}
\tr(\sigma\otimes O \otimes \sigma\otimes O\ Q P_0P_1)\leq D^{-1}\tr(O^2).
    \end{aligned}
\end{equation}
\end{lemma}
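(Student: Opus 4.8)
The plan is to strip off the projectors $P_0,P_1$ at no cost using the fact that they commute with $Q$, then reduce the bound to a direct Pauli-basis evaluation of $\tr[(\sigma\otimes|O|\otimes\sigma\otimes|O|)\,Q]$. First I would record two elementary consequences of $[Q,P_i]=0$ for the projector $Q$: since $QP_iQ^{\perp}=P_iQQ^{\perp}=0$ we get $QP_i=QP_iQ=:R_i$, and a one-line computation (using $P_iQP_i=P_iQ$) shows each $R_i$ is itself a projector with $0\le R_i\le Q$. Hence $QP_0P_1=(QP_0Q)(QP_1Q)=R_0R_1$ is a product of two sub-projectors of $Q$. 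Note also that $M:=\sigma\otimes O\otimes\sigma\otimes O$ is Hermitian.

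Next I would split $M$ into positive pieces. Writing $O=O_+-O_-$ with $O_\pm\ge 0$ and $O_+O_-=0$, set $M_+=\sigma\otimes O_+\otimes\sigma\otimes O_++\sigma\otimes O_-\otimes\sigma\otimes O_-$ and $M_-=\sigma\otimes O_+\otimes\sigma\otimes O_-+\sigma\otimes O_-\otimes\sigma\otimes O_+$, so that $M=M_+-M_-$, both $M_\pm\ge 0$, and $M_++M_-=\sigma\otimes|O|\otimes\sigma\otimes|O|=|M|$. The key estimate is a Hilbert--Schmidt Cauchy--Schwarz bound: for any $N\ge 0$ and any two projectors $R_0,R_1$, writing $\tr(NR_0R_1)=\tr[(N^{1/2}R_1)^{\dagger}(N^{1/2}R_0)]$ gives $|\tr(NR_0R_1)|\le\sqrt{\tr(NR_0)\,\tr(NR_1)}$, and since $R_i\le Q$ this is at most $\tr(NQ)$. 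Applying this with $N=M_\pm$ and using $QP_0P_1=R_0R_1$, $|\tr(MQP_0P_1)|\le|\tr(M_+R_0R_1)|+|\tr(M_-R_0R_1)|\le\tr(M_+Q)+\tr(M_-Q)=\tr(|M|\,Q)$.

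Finally I would compute $\tr(|M|\,Q)$ from $Q=D^{-2}\sum_k W_k^{\otimes 4}$ (Eq.~\eqref{eq:Q}): using that $Q$ is symmetric under permuting the four copies, $\tr(|M|\,Q)=D^{-2}\sum_k\tr(\sigma W_k)^2\,\tr(|O|W_k)^2$. Bounding $\tr(\sigma W_k)^2\le\|\sigma\|_1^2\|W_k\|_\infty^2=1$ and invoking the Pauli completeness relation $\sum_k\tr(|O|W_k)^2=D\,\tr(|O|^2)=D\,\tr(O^2)$ yields $\tr(|M|\,Q)\le D^{-1}\tr(O^2)$, which is the claim (with constant $1$).

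The step I expect to be the real content is the middle one: one cannot afford the naive bound $|\tr(MQP_0P_1)|\le\|QP_0P_1\|_\infty\,\|M\|_1\le\|M\|_1$, since $\|M\|_1=\tr(\sigma)^2\,\|O\|_1^2$ can grow with the dimension and overshoots $D^{-1}\tr(O^2)$ by polynomial factors in $D$. The factorization $QP_0P_1=R_0R_1$ together with the $\sqrt{\tr(NR_0)\tr(NR_1)}\le\tr(NQ)$ inequality is what lets the two projectors be absorbed for free; once that is in place the remaining Pauli computation is routine and in fact is the same calculation already used for the $\Lambda_n^1$ term in the Haar case.
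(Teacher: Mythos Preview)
Your argument is correct and reaches the same endpoint as the paper, namely the bound $\tr(\sigma\otimes|O|\otimes\sigma\otimes|O|\,Q)\le D^{-1}\tr(O^2)$ evaluated via the Pauli expansion of $Q$. The difference lies in how the projectors $P_0,P_1$ are stripped off. The paper introduces a (non-Hermitian) operator $O'$ with $O'^2=O$ and $O'O'^\dagger=|O|$, writes $\sigma\otimes O\otimes\sigma\otimes O=X^2$ for $X=\sigma^{1/2}\otimes O'\otimes\sigma^{1/2}\otimes O'$, and applies a single operator Cauchy--Schwarz to $\tr(XQ\cdot P_0P_1QX)$, using $P_1P_0P_1\le\id$ to absorb both projectors. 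You instead split $M=\sigma\otimes O\otimes\sigma\otimes O$ into its positive and negative parts $M_\pm\ge 0$ and apply the Hilbert--Schmidt Cauchy--Schwarz $|\tr(NR_0R_1)|\le\sqrt{\tr(NR_0)\tr(NR_1)}\le\tr(NQ)$ to each piece, using $R_i=QP_iQ\le Q$. Your route avoids the slightly exotic square-root $O'$ and is arguably more elementary; the paper's route handles everything in one shot without a positive/negative split. Both arrive at exactly the same final Pauli-basis computation. (One small remark: your closing reference to ``the $\Lambda_n^1$ term in the Haar case'' is off---the Haar section never involves $Q$; that computation appears only here, in the Clifford analysis.)
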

\begin{proof}
Suppose the spectrum decomposition of the observable is $O=\sum_j a_j \ket{\Psi_j}\bra{\Psi_j}$, and we define the operator $O'=\sum_j i^{\delta(a_j<0)}|a_j|^{\frac1{2}}\ket{\Psi_j}\bra{\Psi_j}$, such that $O'^2=O$ and $O'O'^{\dag}=\sqrt{O^2}$.
\begin{equation}\label{eq:QP0P1}
    \begin{aligned}
\tr(\sigma\otimes O \otimes \sigma\otimes O \ QP_0P_1)&=\tr(\sigma^{\frac1{2}}\otimes O' \otimes \sigma^{\frac1{2}} \otimes O'Q\ P_0P_1 Q\sigma^{\frac1{2}}\otimes O' \otimes \sigma^{\frac1{2}} \otimes O')\\
&\leq \|\sigma^{\frac1{2}}\otimes O' \otimes \sigma^{\frac1{2}} \otimes O' Q\|_2\ \ \|P_0P_1 Q\sigma^{\frac1{2}}\otimes O' \otimes \sigma^{\frac1{2}} \otimes O'\|_2\\
&\leq \|\sigma^{\frac1{2}}\otimes O' \otimes \sigma^{\frac1{2}} \otimes O' Q\|_2^2=\tr(\sigma \otimes O'O'^{\dag} \otimes \sigma \otimes O'O'^{\dag} Q)\\
    \end{aligned}
\end{equation}
the first inequality is by using the Cauchy–Schwarz inequality, and the second inequality is by 
\begin{equation}\label{}
    \begin{aligned}
\|P_0P_1 Q \sigma^{\frac1{2}}\otimes O' \otimes \sigma^{\frac1{2}} \otimes O'\|_2&=\sqrt{\tr(\sigma \otimes O'O'^{\dag} \otimes \sigma \otimes O'O'^{\dag}QP_1P_0P_1Q)}\\
&\leq \sqrt{\tr(\sigma \otimes O'O'^{\dag} \otimes \sigma \otimes O'O'^{\dag}Q)}=\|\sigma^{\frac1{2}}\otimes O' \otimes \sigma^{\frac1{2}} \otimes O' Q\|_2,
    \end{aligned}
\end{equation}
since $P_1P_0P_1\leq P_1\leq \id$. 

To further bound Eq.~\eqref{eq:QP0P1}, we insert the formula of $Q$ in Eq.~\eqref{eq:Q} to get
\begin{equation}\label{}
    \begin{aligned}
    \tr(\sigma\otimes O \otimes \sigma\otimes O \ QP_0P_1)&\leq\tr(\sigma \otimes O'O'^{\dag} \otimes \sigma \otimes O'O'^{\dag} Q)\\
    &= D^{-2}\sum_k\tr(\sigma W_k)^2 \tr(\sqrt{O^2}W_k)^2 \\
    &\leq D^{-2}\sum_k\tr(\sqrt{O^2}W_k)^2\\
    &= D^{-1} \tr(O^2).
    \end{aligned}
\end{equation}
Here the second inequality is by the fact $\tr(\sigma W_k)^2\leq 1$, and the last line is by the decomposition of the $n$-qubit swap operator in the Pauli basis as $\sw=D^{-1}\sum_k W_k \otimes W_k$.
\end{proof}

\subsection{The contribution of $\Lambda^1_n$ in Clifford case}
For the second term $\Lambda_n^1$, by using the twirling formula in Eq.~\eqref{eq:CliffordTt} one has
\begin{equation}\label{eq:phi4L1}
    \begin{aligned}
\Phi_n^{(4,\mathrm{Cl})}(\Lambda_n^1)= \frac1{4!}\sum_{\lambda} d_\lambda \sum_{\pi\in S_4} \sum_{\mb{b}\neq \mb{b}'}&\bigg{[}\frac1{D_\lambda^+}\tr(\ket{\mb{b}}\bra{\mb{b}}^{\otimes 2}\otimes \ket{\mb{b}'}\bra{\mb{b}'}^{\otimes 2} QT_\pi)T_{\pi^{-1}}Q\\
&+\frac1{D_\lambda^-}\tr(\ket{\mb{b}}\bra{\mb{b}}^{\otimes 2}\otimes \ket{\mb{b}'}\bra{\mb{b}'}^{\otimes 2} Q^{\perp}T_\pi)T_{\pi^{-1}} Q^{\perp}\bigg{]}P_\lambda
    \end{aligned}
\end{equation}

We define the first trace formula as
\begin{equation}\label{}
    \begin{aligned}
Q_0(\pi):=& \tr(\ket{\mb{b}}\bra{\mb{b}}^{\otimes 2}\otimes \ket{\mb{b}'}\bra{\mb{b}'}^{\otimes 2} QT_\pi)\\
=& D^{-2} \sum_k \bra{\mb{b}}W_k\otimes \bra{\mb{b}}W_k\otimes  \bra{\mb{b}'}W_k\otimes \bra{\mb{b}'}W_k\  T_\pi \ \ket{\mb{b}}\otimes\ket{\mb{b}}\otimes \ket{\mb{b}'}\otimes\ket{\mb{b}'}.
    \end{aligned}
\end{equation}
There are two cases which give nonzero value depending on $\pi$.
In the first case, via the permutation $T_\pi$, $\ket{\mb{b}}$ is connected to $\bra{\mb{b}}$, and this also holds for $\mb{b}'$. That is, $\pi\in S_r$ with the set $S_r=\{(),(12),(34),(12)(34)\}$ and one has $|\bra{\mb{b}}W_k\ket{\mb{b}}|^2|\bra{\mb{b}'}W_k\ket{\mb{b}'}|^2$. As in the Haar case, one can only choose $W_k$ with $\{\id_2,Z\}$ for each qubit, so it gives $D/D^2=D^{-1}$. In the second case, all $\ket{\mb{b}}$ are connected to $\bra{\mb{b}'}$, that is $\pi\in S_{r'}$, with $S_{r'}=(13)(24)S_r=S_r(13)(24)$. And one has $|\bra{\mb{b}}W_k\ket{\mb{b}'}|^4$.
For $\mb{b}\neq \mb{b}'$, one needs to choose $\{\id_2,Z\}$ for the qubit with the same bit value of $\mb{b}$ and $\mb{b}'$, and $\{X,Y\}$ for the qubit of different bit values. Thus the result is also $D^{-1}$.
Besides these two cases, the term shows $|\bra{\mb{b}}W_k\ket{\mb{b}'}|^2\bra{\mb{b}}W_k\ket{\mb{b}}|\bra{\mb{b}'}W_k\ket{\mb{b}'}$. It is not hard to see that it is $0$ no matter what $W_k$ is.

Define $Q_1(\pi):=\tr(\ket{\mb{b}}\bra{\mb{b}}^{\otimes 2}\otimes \ket{\mb{b}'}\bra{\mb{b}'}^{\otimes 2} Q^{\perp}T_\pi)$. Note that $Q_0(\pi)+Q_1(\pi)=1$ when $\pi \in S_r$, otherwise it is $0$, as shown in the Haar case. We thus have
\begin{equation}\label{}
   Q_0(\pi)=\left\{ \begin{aligned}
     &D^{-1},&\pi\in S_r\cup S_{r'}\\
         &0,&\pi\in S_4\setminus(S_r\cup S_{r'})  
   \end{aligned}\right.\ \ \ \ \ 
   Q_1(\pi)=\left\{ \begin{aligned}
     &1-D^{-1},&\pi\in S_r\\
     &-D^{-1},&\pi\in S_{r'}\\
    &0,&\pi\in S_4\setminus(S_r\cup S_{r'})  
   \end{aligned}\right.
\end{equation}

Inserting these into Eq.~\eqref{eq:phi4L1}, and define the projector $P_{r'}=\frac1{4}\sum_{\pi\in S_{r'}}T_\pi$ similar as $P_r$ in the Haar case, one has
\begin{equation}\label{}
    \begin{aligned}
&\Phi_n^{(4,\mathrm{Cl})}(\Lambda_n^1)\\
=&\frac1{4!}\sum_{\lambda} d_\lambda (D^2-D) \left\{\frac1{D_\lambda^+} D^{-1}\sum_{\pi\in S_r\cup S_{r'} }T_\pi Q+ \frac1{D_\lambda^-} \left[(1-D^{-1}) \sum_{\pi\in S_r}T_\pi -D^{-1}\sum_{\pi'\in S_{r'}}T_{\pi'}\right] Q^{\perp}\right\}P_\lambda\\
=&\sum_{\lambda} \Theta_1(\lambda) QP_{r}P_{\lambda}+\Theta_2(\lambda) QP_{r'}P_{\lambda}+\Theta_3(\lambda) P_{r}P_{\lambda}+\Theta_4(\lambda) P_{r'}P_{\lambda})
    \end{aligned}
\end{equation}
with
\begin{equation}\label{}
    \begin{aligned}
&\Theta_1(\lambda)=\frac1{3!}d_\lambda (D^2-D)(\frac1{D_\lambda^+D}-\frac{1-D^{-1}}{D_\lambda^-})=\mathcal{O}(D^{-1}),\\ &\Theta_2(\lambda)=\frac1{3!}d_\lambda (D^2-D)(\frac1{D_\lambda^+D}+\frac1{D_\lambda^-D})=\mathcal{O}(D^{-1}),\\
&\Theta_3(\lambda)=\frac1{3!}d_\lambda (D^2-D)\frac{1-D^{-1}}{D_\lambda^-}=\mathcal{O}(D^{-2}),\\
&\Theta_4(\lambda)=-\frac1{3!}d_\lambda (D^2-D)\frac1{D_\lambda^-D}=\mathcal{O}(D^{-3}),\\
    \end{aligned}
\end{equation}
by the fact $D_\lambda^+=\mathcal{O}(D^2)$ (otherwise it is zero) and  $D_\lambda^-=\mathcal{O}(D^4)$ \cite{zhu2016clifford}. For the ir-rep where $D_\lambda^+=\tr(QP_\lambda)=0$, that is, $QP_\lambda=0$, one has $\Theta_1(\lambda)=\Theta_2(\lambda)=0$.
Recall Eq.~\eqref{eq:L0Cliff}, one finally has the contribution of $\Lambda_n^1$ as 
\begin{equation}\label{eq:ClL1}
    \begin{aligned}
&(D+1)^2\tr[\sigma\otimes O_0 \otimes \sigma\otimes O_0\  \Phi^{(4,\mathrm{Cl})}_n\left(\Lambda_n^1\right)]\\
=&(D+1)^2\tr\left\{\sigma\otimes O_0 \otimes \sigma\otimes O_0\ \left[
\sum_{\lambda} \Theta_1(\lambda) QP_{r}P_{\lambda}+\Theta_2(\lambda) QP_{r'}P_{\lambda}+\Theta_3(\lambda) P_{r}P_{\lambda}+\Theta_4(\lambda) P_{r'}P_{\lambda}\right]\right\}\\
\leq &(D+1)^2 \left[\sum_\lambda \Theta_1(\lambda)+\Theta_2(\lambda)\right] D^{-1}\tr(O_0^2)+(D+1)^2\left[\sum_\lambda\Theta_3(\lambda)+|\Theta_4(\lambda)|\right]\mathcal{O}(1)\tr(O_0^2)\\
=&\mathcal{O}(1)\tr(O_0^2).
\end{aligned}
\end{equation}
Here in the last but one line,  we apply Lemma \ref{prop:piO2Qnew} for the first two terms, and the last two terms can be bounded by following Eq.~\eqref{eq:Haaruselatter} in the Haar case. Combining Eq.~\eqref{eq:ClL0} and \eqref{eq:ClL1}, we finally prove Proposition \ref{prop:Cl} in main text.

\section{Numerical simulations and practical time-cost analysis}\label{AppendNum}
\subsection{Details of numerical simulations}\label{AppendNum1}

In this section, we clarify the technique and details of the numerical simulations for the scaling of statistical variance with shot-number $K$, setting-number $M$, weight $w$ of Pauli observables, and qubit number $n$ in Fig.~\ref{fig:Pauli-numeric} and \ref{fig:clifford-numeric}.

We first generate the processed state $\rho$, which is either a $GHZ$ state or a more general $GHZ_\theta$ state in quantum simulators. Then we perform randomized Pauli or Clifford measurements on $\rho$ for $M$ different random bases, each one repeating $K$ shots, and record the measurement results and the corresponding bases. Using the numerical results, we can construct the classical shadow $\hat{\rho}$ through Eq.~\eqref{eq:shadowPost} and Eq.~\eqref{rhoij}, and calculate the value of $\hat{o}=\tr(O\hat{\rho})$. Here, the observables $O$ are set as illustrated in the captions of Fig.~\ref{fig:Pauli-numeric} and \ref{fig:clifford-numeric}. Finally, for each pair of $(M,K)$ which generates a total estimator $\hat{\rho}$ in Eq.~\eqref{rhoij}, we repeat the whole randomized measurement procedure for $T=2000$ times to obtain the estimation of the statistical variance, with $\mathrm{Var}\left[\tr(O \widehat{\rho})\right]\sim \frac{1}{T}\sum^{T}_{t=1}\left[\tr(O \widehat{\rho_{t}}) - \tr(O\rho) \right]^{2}$. 

The numerical simulation results of random Pauli and Clifford measurements in Fig.~\ref{fig:Pauli-numeric} and \ref{fig:clifford-numeric} show consistency with the analytical results given by Eq.~\eqref{eq:pauli}, \eqref{eq:pauli-seperate}, and \eqref{eq:Cl:var}.

\subsection{Practical time-cost and accuracy analysis for random Pauli measurements}
Here, we illustrate the advantage of the introduced multi-shot protocol, by investigating the variance and time cost enhancements against the original shadow protocol for fixed measurement budgets and accuracy requirements. 

Generally, the total experimental time cost $\mathcal{C}_{tot}$ depends on both the time to change measurement settings $\mathcal{C}_{M}$ and the time to perform single-shot measurements $\mathcal{C}_{K}$. We assume, similar as in Ref.~\cite{seif2023shadow}, $\mathcal{C}_{M}=1000$ and $\mathcal{C}_{K}=1$, which indicates that changing the measurement settings requires more time than performing measurements in a fixed basis. Thus, the measurement budget is approximately $\mathcal{C}_{tot} = 1000 M + MK$.
As indicated in Eq.~\eqref{eq:pauli}, the variance to measure a Pauli operator $P$ using random Pauli measurements depends on $\{M, K, w, \tr(P\rho)\}$. Here, we directly set $w=2$, $\tr(P\rho)=0.2$ and explore the scaling of the variance and total time cost with the shot-number $K$ and setting-number $M$, as shown in Fig.~\ref{fig:time-cost}.

\begin{figure}[h]
\centering
\includegraphics[width=0.65\textwidth]{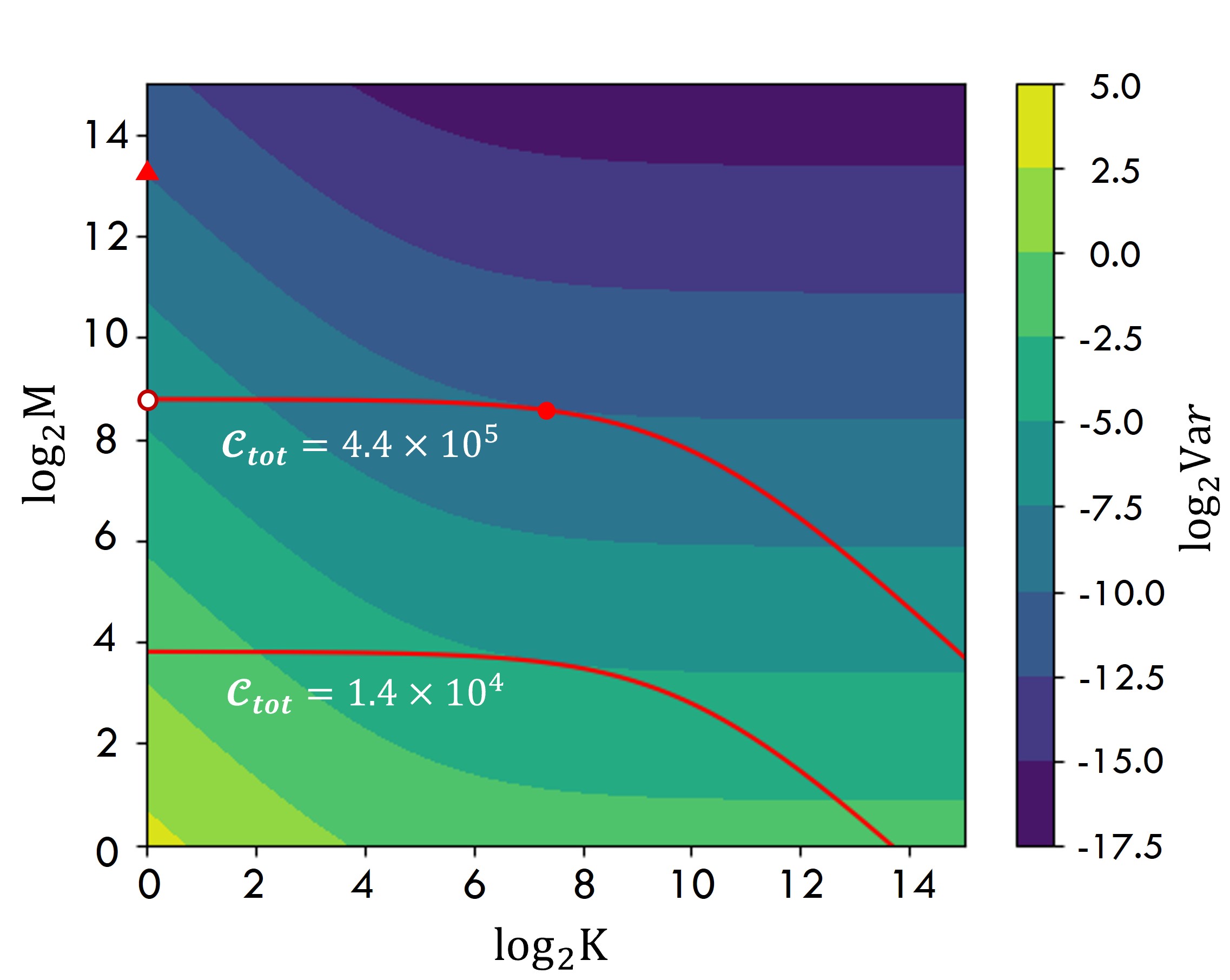}
\caption{Scaling of the analytical variance of random Pauli measurements and total time cost with $M$ and $K$. The variance (colored area) is given by Eq.~\eqref{eq:pauli} with $w=2$ and $\tr(P\rho)=0.2$.
The red lines show the contours of fixed $\mathcal{C}_{tot} = 1000 M + MK$ with different measurement budgets $\mathcal{C}_{tot}$. The red dot represents the optimal choice of $(M, K)$ for multi-shot protocol when $\mathcal{C}_{tot}=4.4\times 10^{5}$. The red ring on the vertical axis (i.e., $K=1$) is the choice of $M$ for the original shadow protocol using the same measurement budget, as that of the red dot. The red triangle on the vertical axis labels the corresponding choice of $M$ with the same variance as the red dot.
}
\label{fig:time-cost}
\end{figure}

Consider a fixed measurement budget $\mathcal{C}_{tot}=4.4\times 10^{5}$, we find the optimal choice of $(M, K)$ using our multi-shot protocol. It provides the lowest variance for estimating $\tr(P\rho)$, that is, $\mathrm{Var}_{ms}=0.001$, with the optimal choice approximately $M=380$ and $K=160$. Under the same budget, as shown in Fig.~\ref{fig:time-cost}, the variance using the original shadow protocol $\mathrm{Var}_{os}$ is much larger than using the multi-shot method $\mathrm{Var}_{ms}$, i.e., $\mathrm{Var}_{os}=0.011 > \mathrm{Var}_{ms}=0.001$. 
In the meantime, if one fixes the variance requirement as $\mathrm{Var}=0.001$, the total experimental time cost of the original shadow is $\mathcal{C}'_{tot} = 9.15\times 10^{6}$ marked by the point of red triangle in Fig.~\ref{fig:time-cost}, which is much larger than the one using our multi-shot technique.
Therefore, the results of the example above show advantages of our multi-shot shadow protocol against the original shadow from both the measurement budgets and the accuracy aspects.

\end{appendix}

\end{document}